\documentclass[letterpaper, 10 pt, conference]{ieeeconf}
\IEEEoverridecommandlockouts  
\usepackage[letterpaper,
            left=.75in,
            right=.75in,
            top=1in,
            bottom=.75in]{geometry}
\let\labelindent\relax
\usepackage[shortlabels]{enumitem}
\usepackage{subcaption}
\usepackage{cite}
\usepackage{amsmath,amssymb,amsfonts}
\usepackage{algpseudocode}
\usepackage{algorithm}
\usepackage{graphicx}
\usepackage{textcomp}
\usepackage{hhline}
\usepackage[abs]{overpic}
\usepackage{mathrsfs}
\usepackage{xcolor}
\usepackage{hyperref}
\usepackage{booktabs}
\usepackage{multirow}
\definecolor{violet}{rgb}{.5,0,.5}
\definecolor{orange}{rgb}{1,.65,0}

\newcommand{\R}{\mathbb{R}}
\newcommand{\N}{\mathbb{N}}
\newcommand{\E}{\mathbb{E}}

\newcommand{\Tr}{\operatorname{tr}}
\newcommand{\vect}{\operatorname{vec}}
\newcommand{\Nv}{N_{\mathrm{v}}}
\newcommand{\Acl}{A_{\mathrm{cl}}}
\DeclareMathOperator*{\argmin}{arg\,min}
\newcommand{\bK}{\mathbf{K}}

\newtheorem{theorem}{Theorem}[section]
\newtheorem{corollary}[theorem]{Corollary}
\newtheorem{lemma}[theorem]{Lemma}
\newtheorem{remark}[theorem]{Remark}
\newtheorem{assumption}[theorem]{Assumption}
\newtheorem{prop}[theorem]{Proposition}
\newtheorem{definition}[theorem]{Definition}
\newtheorem{problem}[theorem]{Problem}

\def\BibTeX{{\rm B\kern-.05em{\sc i\kern-.025em b}\kern-.08em
    T\kern-.1667em\lower.7ex\hbox{E}\kern-.125emX}}

\begin{document}

\title{Receding-Horizon Policy Gradient for Polytopic Controller Synthesis}

\author{Shiva Shakeri,
        P\'{e}ter Baranyi, \IEEEmembership{Members, IEEE},
        and Mehran Mesbahi, \IEEEmembership{Fellow, IEEE}
\thanks{S.~Shakeri and M.~Mesbahi are with the William E. Boeing Department of
Aeronautics and Astronautics, University of Washington
(e-mail: \{sshakeri, mesbahi\}@uw.edu).
P.~Baranyi is with the Corvinus Institute for Advanced Studies (CIAS)
and Institute of Data Analytics and Information Systems,
Corvinus University of Budapest, HU-1093 Budapest, Hungary,
and also with the Hungarian Research Network (HUN-REN),
HU-1052 Budapest, Hungary
(e-mail: peter.baranyi@uni-corvinus.hu).
This work has been supported by the 2024-1.2.3-HU-RIZONT-2024-00030 project.}}

\maketitle

\begin{abstract}
We propose the Polytopic Receding-Horizon Policy Gradient (P-RHPG)
algorithm for synthesizing Parallel Distributed Compensation (PDC)
controllers via Tensor Product (TP) model transformation.
Standard LMI-based PDC synthesis grows increasingly conservative
as model fidelity improves; P-RHPG instead solves a finite-horizon
integrated cost via backward-stage decomposition.
The key result is that each stage subproblem is a strongly convex
quadratic in the vertex gains, a consequence of the linear
independence of the HOSVD weighting functions, guaranteeing a
unique global minimizer and linear convergence of gradient descent
from any initialization.
With zero terminal cost, the optimal cost increases monotonically
to a finite limit and the gain sequence remains bounded; terminal
costs satisfying a mild Lyapunov condition yield non-increasing
convergence.
Experiments on an aeroelastic wing benchmark confirm convergence
to a unique infinite-horizon optimum across all tested terminal
cost choices and near-optimal performance relative to the
pointwise Riccati lower bound.
\end{abstract}

\section{Introduction}
\label{sec:introduction}

Controlling nonlinear systems across a wide operating range is a long-standing engineering challenge. The classical approach of linearizing about equilibria and designing the corresponding local controllers offers well-understood guarantees, but its validity degrades away from the operating point~\cite{rugh2000, khalil2002}. The quasi-linear parameter-varying (qLPV) framework~\cite{shamma1990, apkarian1995, rugh2000} provides an exact reformulation of nonlinear dynamics as a linear system with parameter-dependent matrices, capturing the full nonlinear behavior without any approximation error.

Among qLPV representations, the polytopic form is particularly convenient for control synthesis. The Tensor Product (TP) model transformation~\cite{baranyi2004, baranyi2016book} provides a numerically executable, HOSVD-based\footnote{Higher-Order Singular Value Decomposition (HOSVD)~\cite{delathauwer2000}.} procedure that converts any qLPV model into an exact polytopic form, with adjustable complexity via truncation. The resulting Parallel Distributed Compensation (PDC) structure~\cite{wang1996, tanaka2001} shares the same weighting functions between plant and controller, blending local vertex gains into a global nonlinear controller. This combination has been successfully applied to aeroelastic wing control~\cite{takarics2019, takarics2021} and other design problems.

The standard synthesis tool for PDC controllers is the LMI-based stability analysis~\cite{tanaka2001, boyd1994}, that certifies stability by searching for a single parameter-independent Lyapunov function. While computationally convenient, this approach has well-documented limitations~\cite{sala2007, szollosi2016, szollosi2018}. The common Lyapunov matrix requirement introduces conservatism that grows with the number of vertices and cannot be reduced by refining the underlying grid. More fundamentally, LMI feasibility is often used as a stability certificate, not a performance criterion.

Meanwhile, policy optimization methods have made substantial progress on feedback synthesis for linear systems. Fazel {\em et al.}~\cite{fazel2018} have shown that the LQR cost satisfies gradient dominance properties that guarantee global convergence of gradient descent to the optimal controller. This sparked a wave of extensions to $\mathcal{H}_\infty$~\cite{zhang2021robust}, risk-sensitive~\cite{zhang2021risk}, and structured settings~\cite{hu2023survey, talebi2024policy}. In particular, Hambly, Xu, and Yang~\cite{hambly2021} showed that each backward stage of the LQR recursion is a strongly convex problem, and the RHPG framework~\cite{zhang2023rhpg} built on this to obtain controllers that converge monotonically to the solution of the infinite-horizon LQR problem.

Despite this progress, the PO and polytopic control communities have developed largely separately. To date, no policy gradient method exists for the PDC setting. This paper bridges this gap by optimizing a frozen-parameter integrated surrogate cost (Remark~\ref{rem:frozen}), under which each stage subproblem is strongly convex in the vertex gains, enabling P-RHPG to converge globally with guaranteed linear rate from zero initialization. Specifically, the contributions of this work are:
\begin{enumerate}[(i)]
    \item Each PDC stage subproblem is a strongly convex quadratic with a unique global minimizer (\S\ref{sec:stage}).
    \item P-RHPG solves the problem as a backward sequence of unconstrained subproblems, each in closed form or by globally convergent gradient descent (\S\ref{sec:algorithm}).
    \item With zero terminal cost, the optimal finite-horizon cost converges 
monotonically to a finite limit and the gains remain bounded; convergence 
to the infinite-horizon optimum is verified empirically 
(\S\ref{sec:algorithm}--\ref{sec:experiments}).
    \item On a three-parameter aeroelastic benchmark, P-RHPG achieves near-optimal performance relative to the pointwise Riccati lower bound (\S\ref{sec:experiments}).
\end{enumerate}
\section{Preliminaries and Problem Formulation}
\label{sec:problem}

This section introduces the polytopic plant model, the PDC controller
parameterization, and the finite-horizon cost that P-RHPG optimizes.

\subsection{Polytopic Plant and PDC Controller}
\label{sec:plant}

Consider a discrete-time nonlinear system $x_{t+1} = f(x_t, u_t)$ with
state $x_t \in \R^n$ and input $u_t \in \R^m$. We assume that the system
admits an \emph{exact} quasi-linear parameter-varying (qLPV) representation
\begin{equation}\label{eq:qlpv}
    x_{t+1} = A(p_t)\,x_t + B(p_t)\,u_t,
\end{equation}
where $p_t \in \Omega \subset \R^d$ is a scheduling parameter varying over
a compact set $\Omega$. Since $p_t$ may depend on the state $x_t$, the
matrices $A(\cdot)$ and $B(\cdot)$ encode the full nonlinearity of the
original dynamics without an approximation error~\cite{toth2010}.

The Tensor Product (TP) model transformation~\cite{baranyi2004, baranyi2016book} provides a systematic
procedure for expressing the parameter-dependent matrices in the
\emph{polytopic} form
\begin{equation}\label{eq:polytopic}
    \begin{bmatrix} A(p) & B(p) \end{bmatrix}
    = \sum_{i=1}^{\Nv} \alpha_i(p)
      \begin{bmatrix} A_i & B_i \end{bmatrix},
\end{equation}
where $(A_i, B_i) \in \R^{n\times n}\times\R^{n\times m}$ are the
\emph{vertex systems} and $\alpha_i : \Omega \to \R$ are continuous
\emph{weighting functions} derived from HOSVD~\cite{delathauwer2000} of the system tensor. The number of vertices $\Nv$
is determined by the HOSVD truncation ranks. Substituting
\eqref{eq:polytopic} into \eqref{eq:qlpv}, the dynamics assume the form
\begin{equation}\label{eq:dynamics-polytopic}
    x_{t+1} = \sum_{i=1}^{\Nv} \alpha_i(p_t)\bigl(A_i\,x_t + B_i\,u_t\bigr).
\end{equation}
 
\begin{assumption}[Weighting functions]\label{ass:weights}
For all $p\in\Omega$, the functions $\alpha_1,\ldots,\alpha_{\Nv}:\Omega\to\R$ satisfy:
\begin{enumerate}[(i)]
    \item \emph{Non-negativity and sum normalization}:
          $\alpha_i(p)\geq 0$ and $\sum_{i=1}^{\Nv}\alpha_i(p)=1$;
    \item \emph{Full support}: $\{p:\alpha_i(p)>0\}$ has positive
          Lebesgue measure for each $i$;
    \item \emph{Linear independence}: $\sum_{i=1}^{\Nv}c_i\,\alpha_i(p)=0$
          for all $p \in \Omega$ implies $c = 0$.
\end{enumerate}
\end{assumption}

Condition (i) defines the Sum Normalized Non-Negative (SNNN) convexity
type~\cite{baranyi2014book}; condition (ii) excludes vertices active
only on a set of measure zero. Condition (iii) is guaranteed by the
HOSVD construction: the weighting functions are products of left
singular vectors of the mode-$k$ unfoldings, which are orthonormal
by construction~\cite{delathauwer2000}, giving linear independence
on $\Omega$~\cite{baranyi2004, baranyi2016book}.

The controller is parameterized with the same weighting functions as the
plant. This \emph{Parallel Distributed Compensation} (PDC) structure~\cite{wang1996, tanaka2001} ensures
that the blended gain $K(p)$ is always a convex combination of the vertex
gains:
\begin{equation}\label{eq:pdc}
    u_t = K(p_t)\,x_t, \qquad
    K(p) = \sum_{i=1}^{\Nv} \alpha_i(p)\,K_i,
\end{equation}
where $K_i \in \R^{m\times n}$ are the \emph{vertex gains}. The weighting
functions are inherited from the plant and are not decision variables. The
entire design freedom resides in the collection
\begin{equation}\label{eq:decision-variable}
    \bK = (K_1,\ldots,K_{\Nv}) \in \R^{\Nv mn},
\end{equation}
treated as a single vector in $\R^{\Nv mn}$ when discussing the
optimization landscape in Section~\ref{sec:stage}.

\subsection{Closed-Loop Structure}
\label{sec:crossterms}

Substituting the PDC controller~\eqref{eq:pdc} into the polytopic
dynamics~\eqref{eq:dynamics-polytopic}, the closed-loop system at a
frozen parameter value $p \in \Omega$ is
\begin{equation}\label{eq:cl-dynamics}
    x_{t+1} = \Acl(p;\,\bK)\,x_t,
\end{equation}
where $\Acl(p;\,\bK) = A(p) + B(p)\,K(p)$ is the closed-loop matrix.
The polytopic structure of both the plant and the controller produces a
\emph{double sum} in the closed-loop matrix. Expanding
$B(p)K(p) = \bigl(\sum_i \alpha_i(p)B_i\bigr)\bigl(\sum_j \alpha_j(p)K_j\bigr)$,
we obtain
\begin{equation}\label{eq:Acl-expanded}
    \Acl(p;\,\bK) = \sum_{i=1}^{\Nv}\sum_{j=1}^{\Nv}
    \alpha_i(p)\,\alpha_j(p)\,G_{ij},
\end{equation}
where $G_{ij} = A_i + B_i K_j$. The diagonal terms ($i=j$) lead to the standard vertex closed-loop
matrices $G_{ii} = A_i + B_i K_i$; the cross-terms ($i\neq j$),
$G_{ij} = A_i + B_i K_j$, arise as the plant and controller share
the same weighting functions. Critically, $\rho(G_{ii}) < 1$ for
all $i$ does not imply joint stability across
$\Omega$~\cite{tanaka2001}, which is the primary source of
conservatism in LMI-based synthesis.

\subsection{Finite-Horizon Integrated Cost}
\label{sec:cost}

In order to evaluate the closed-loop performance of the PDC controller across all
operating points $p \in \Omega$, we equip $\Omega$ with a probability
measure $\mu$ having a continuous, strictly positive density.
The measure $\mu$ encodes the relative importance of different
operating conditions; in experiments we take $\mu$ to be
the uniform distribution on $\Omega$.
We adopt a
\emph{frozen-parameter} cost: for each fixed $p \in \Omega$, the system
is treated as LTI with matrices $(A(p), B(p))$ over a finite horizon
$N \in \N$. The per-parameter cost is then
\begin{equation}\label{eq:cost-per-p}
    J_N(p;\,\bK_{0:N-1})
    = \sum_{t=0}^{N-1}\bigl(x_t^\top Q\,x_t + u_t^\top R\,u_t\bigr)
    + x_N^\top Q_N\,x_N,
\end{equation}
where the controller at each time step $t$ applies the time-varying PDC
gain $K(p,t) = \sum_{i=1}^{\Nv}\alpha_i(p)\,K_{i,t}$, with
$K_{i,t} \in \R^{m\times n}$ the vertex gains at time $t$. The full
decision variable is $\bK_{0:N-1} = \{K_{i,t}\} \in \R^{N\Nv mn}$,
and the integrated cost we minimize is
\begin{equation}\label{eq:cost-integrated}
    J_N(\bK_{0:N-1}) = \int_\Omega J_N(p;\,\bK_{0:N-1})\,d\mu(p).
\end{equation}

\begin{assumption}[Cost matrices]\label{ass:cost}
$Q \succ 0$, $R \succ 0$, $Q_N \succeq 0$, and the initial state
covariance $\Sigma_0 = \E[x_0 x_0^\top] \succ 0$.
\end{assumption}

Standard dynamic programming~\cite{bertsekas2017} yields the backward cost-to-go recursion:
with $P_N(p) = Q_N$, for $t = N-1,\ldots,0$,
\begin{equation}\label{eq:backward-P}
    P_t(p) = Q + K(p,t)^\top R\,K(p,t)
            + \Acl(p,t)^\top P_{t+1}(p)\,\Acl(p,t),
\end{equation}
where $\Acl(p,t) = \sum_{i,j}\alpha_i(p)\alpha_j(p)(A_i + B_i K_{j,t})$,
so that $J_N(\bK_{0:N-1}) = \int_\Omega \Tr(P_0(p)\,\Sigma_0)\,d\mu(p)$.
Each time step $t$ of this recursion constitutes a \emph{stage}: given
the cost-to-go $P_{t+1}(p)$ inherited from the next stage, the vertex
gains $\{K_{i,t}\}$ are chosen to minimize the stage cost, reducing the
joint problem over $\R^{N\Nv mn}$ to a sequence of $N$ smaller
subproblems solved in backward order $t = N-1,\ldots,0$.

A key feature of this formulation is that \eqref{eq:cost-integrated}
is \emph{finite for any vertex gains} $\bK_{0:N-1}$: the sum in
\eqref{eq:cost-per-p} has finitely many terms regardless of the spectral
radii of the closed-loop matrices. This eliminates the need for a
stabilizing initialization; the optimization is unconstrained over
all of $\R^{N\Nv mn}$~\cite{hambly2021}.

Following the \emph{receding-horizon} principle~\cite{zhang2023rhpg}, only the first-stage
gains $\hat{\bK} = (K_{1,0}^*,\ldots,K_{\Nv,0}^*)$ are deployed as a
time-invariant PDC controller.

\begin{remark}[Frozen-parameter approximation]\label{rem:frozen}
In the qLPV setting, $p_t$ evolves with the state. Treating $p$ as
constant over the horizon is standard in the LPV literature~\cite{rugh2000, toth2010} and is a
good approximation when the parameter varies slowly relative to the
system dynamics. The trajectory-level cost, where $p_t = p(x_t)$, is
left for future work.
\end{remark}

\begin{problem}\label{prob:main}
Given the polytopic system \eqref{eq:dynamics-polytopic}, PDC
controller \eqref{eq:pdc}, and Assumptions~\ref{ass:weights}--\ref{ass:cost},
find
\begin{equation}\label{eq:problem}
    \bK_{0:N-1}^* = \argmin_{\bK_{0:N-1}\,\in\,\R^{N\Nv mn}}
    J_N(\bK_{0:N-1}).
\end{equation}
\end{problem}

\section{Stage Convexity and Policy Gradient}
\label{sec:stage}

This section contains the main theoretical contribution of this work. We show that
Problem~\ref{prob:main} decomposes into $N$ stage subproblems solved in
backward order, each of which is a strongly convex quadratic in the vertex
gains.
\subsection{Backward Stage Decomposition}
\label{sec:decomp}

Rather than optimizing $J_N$ over all $N\Nv mn$ variables simultaneously,
we exploit the nested structure of the backward recursion~\eqref{eq:backward-P}.
The cost-to-go $P_h(p)$ at stage $h$ depends on the gains $\{K_{i,t}\}$
for $t = h,\ldots,N-1$, but its dependence on the stage-$h$ gains
$\{K_{i,h}\}$ is entirely mediated by the single recursion step
\begin{equation}\label{eq:Ph-single-step}
  \begin{aligned}
    P_h(p) = Q &+ K(p,h)^\top R\,K(p,h)\\
               &+ \Acl(p,h)^\top P_{h+1}(p)\,\Acl(p,h).
  \end{aligned}
\end{equation}
If the future cost-to-go $P_{h+1}(p)$ is held fixed, this defines $P_h(p)$
as a function of the stage-$h$ vertex gains alone. We therefore define
the \emph{stage cost} at stage $h$ as
\begin{equation}\label{eq:stage-cost}
    \Phi_h\bigl(\{K_{i,h}\}\bigr)
    = \int_\Omega \Tr\bigl[P_h(p)\,\Sigma_0\bigr]\,d\mu(p),
\end{equation}
where $P_{h+1}(p)$ is treated as fixed data from the already-solved stages
$h+1,\ldots,N-1$. Minimizing $\Phi_h$ over $\{K_{i,h}\}$ minimizes the integrated
cost-to-go at stage $h$. Since $P_h(p)$ depends on $\{K_{i,h}\}$
only through~\eqref{eq:Ph-single-step}, the stage-$h$ gains do not
affect the already-optimized stages $h{+}1,\ldots,N{-}1$, so the
backward sweep is equivalent to joint minimization over all
$N\Nv mn$ variables.

Since $K(p,h) = \sum_i \alpha_i(p)K_{i,h}$ is linear in the vertex gains,
$\Phi_h$ is a quadratic function of $\{K_{i,h}\}$, fully characterized
by its Hessian and gradient.

\subsection{Gram Matrix and Strong Convexity}
\label{sec:convexity}

The Hessian of $\Phi_h$ is constant; its structure is governed by the
following object.

\begin{definition}[Gram matrix]\label{def:gram}
The Gram matrix $\Gamma \in \R^{\Nv\times\Nv}$ of the weighting functions
with respect to $\mu$ is
\begin{equation}\label{eq:gram}
    \Gamma_{ij} = \int_\Omega \alpha_i(p)\,\alpha_j(p)\,d\mu(p),
    \quad i,j=1,\ldots,\Nv.
\end{equation}
\end{definition}

\begin{lemma}\label{lem:gram-pd}
Under Assumption~\ref{ass:weights}, $\Gamma \succ 0$.
\end{lemma}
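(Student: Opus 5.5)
The plan is to show that $\Gamma$ is a Gram matrix, and hence positive semidefinite, and then to use linear independence (Assumption~\ref{ass:weights}(iii)) to rule out a nontrivial kernel. Fix $c\in\R^{\Nv}$ and set $g_c(p)=\sum_{i=1}^{\Nv}c_i\,\alpha_i(p)$. Bilinearity of the integral in~\eqref{eq:gram} gives
\begin{equation*}
  c^\top\Gamma c=\sum_{i,j}c_ic_j\int_\Omega\alpha_i\alpha_j\,d\mu=\int_\Omega g_c(p)^2\,d\mu(p)\;\ge\;0 .
\end{equation*}
This shows $\Gamma\succeq 0$. Each $\alpha_i$ is continuous, hence bounded on the compact set $\Omega$, so all integrals are finite.

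It then remains to show that $c^\top\Gamma c=0$ forces $c=0$. If $\int_\Omega g_c^2\,d\mu=0$, then $g_c=0$ holds $\mu$-almost everywhere. I will upgrade this to $g_c(p)=0$ for every $p\in\Omega$ and then invoke Assumption~\ref{ass:weights}(iii) to conclude $c=0$, which gives $\Gamma\succ 0$.

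\textbf{Main obstacle.} The delicate step is passing from ``$g_c=0$ $\mu$-a.e.'' to ``$g_c\equiv 0$ on $\Omega$.'' I would argue as follows. The function $g_c$ is continuous, so the set $U=\{p\in\Omega: g_c(p)\ne 0\}$ is relatively open in $\Omega$. Since $\mu$ has a continuous, strictly positive density, $\mu(U)>0$ whenever $U$ has positive Lebesgue measure. So it suffices that every nonempty relatively open subset of $\Omega$ has positive Lebesgue measure.

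This holds when $\Omega$ is the closure of its interior. That is the case in the TP-model setting, where $\Omega$ is a hyperrectangle of parameter ranges. In that case, any $p\in U$ has a neighborhood $B_\varepsilon(p)\cap\Omega\subset U$ that contains an open ball of $\R^d$, so $U$ has positive measure. This contradicts $g_c=0$ $\mu$-a.e., so $U=\emptyset$.

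I would state this regularity of $\Omega$ explicitly, as implicit in the problem setup. The full-support condition (ii) fits with it but is not by itself what is needed.

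If one prefers not to assume this regularity, there is an alternative: interpret (iii) as independence in $L^2(\mu)$, that is, $g_c=0$ $\mu$-a.e. implies $c=0$. The conclusion then follows immediately from the display above.
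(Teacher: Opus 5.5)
Your proof is correct and follows the paper's argument: write $c^\top\Gamma c=\int_\Omega\bigl(\sum_i c_i\alpha_i(p)\bigr)^2\,d\mu(p)\ge 0$ and invoke Assumption~\ref{ass:weights}(iii). The paper simply asserts that a vanishing integral forces $\sum_i c_i\alpha_i\equiv 0$ on all of $\Omega$. Your passage from $\mu$-a.e.\ to everywhere supplies the justification that step needs, via continuity of the $\alpha_i$, the strictly positive density of $\mu$, and $\Omega$ being the closure of its interior (or, alternatively, reading (iii) as independence in $L^2(\mu)$).
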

\begin{proof}
For any $c \in \R^{\Nv}$, $c^\top\Gamma c = \int_\Omega
(\sum_i c_i\alpha_i(p))^2 d\mu(p) \geq 0$, and equals zero only if
$\sum_i c_i\alpha_i(p) = 0$ for all $p\in\Omega$.
By Assumption~\ref{ass:weights}(iii), this implies $c = 0$.
\end{proof}

\begin{theorem}[Strong convexity]\label{thm:strong-convexity}
Given Assumptions~\ref{ass:weights}--\ref{ass:cost}, the Hessian of
$\Phi_h$ has the $\Nv\times\Nv$ block structure
\begin{equation}\label{eq:hessian}
    [\nabla^2\Phi_h]_{ij}
    = \int_\Omega \alpha_i(p)\,\alpha_j(p)\cdot
      2\,\Sigma_0\otimes M_h(p)\,d\mu(p),
\end{equation}
where $M_h(p) = R + B(p)^\top P_{h+1}(p)B(p)$, and
$\nabla^2\Phi_h \succ 0$ at every stage $h = 0,\ldots,N-1$.
\end{theorem}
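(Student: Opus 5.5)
We expand $\Phi_h$ as an explicit quadratic in the stacked vertex gains, read off its (constant) Hessian, and then prove positive definiteness. Positive definiteness will come from two facts: a positive lower bound on $M_h(p)$ that holds uniformly in $p$, and Lemma~\ref{lem:gram-pd}.

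\emph{Step 1 (expansion).} Fix $p$ and write $K(p,h)=\sum_j\alpha_j(p)K_{j,h}$. Since $\sum_i\alpha_i(p)=1$ by Assumption~\ref{ass:weights}(i), the double sum collapses to $\Acl(p,h)=A(p)+B(p)K(p,h)$. Substituting into \eqref{eq:Ph-single-step} gives
\[
\Tr[P_h(p)\Sigma_0]=\Tr\bigl[(Q+A^\top P_{h+1}A)\Sigma_0\bigr]+2\Tr\bigl[K^\top B^\top P_{h+1}A\,\Sigma_0\bigr]+\Tr\bigl[K^\top M_h(p) K\,\Sigma_0\bigr],
\]
where $M_h(p)=R+B(p)^\top P_{h+1}(p)B(p)$. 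Using $\Tr(K^\top M K\Sigma_0)=\vect(K)^\top(\Sigma_0\otimes M)\vect(K)$ and $\vect(K(p,h))=\sum_i\alpha_i(p)\vect(K_{i,h})$, the quadratic part becomes $\sum_{i,j}\alpha_i\alpha_j\vect(K_{i,h})^\top(\Sigma_0\otimes M_h)\vect(K_{j,h})$. We then integrate against $\mu$ and differentiate twice. Since $M_h$ and $\Sigma_0$ are symmetric, this yields exactly the block formula \eqref{eq:hessian}. The interchange of integration and differentiation is justified because $\alpha_i$, $A$, $B$ and $P_{h+1}$ are continuous on the compact set $\Omega$, so every integrand is bounded.

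\emph{Step 2 (uniform lower bound on $M_h$).} We show by backward induction that $P_{h+1}(p)\succeq 0$ for every $p$. The base case is $P_N=Q_N\succeq 0$. For the inductive step, \eqref{eq:backward-P} expresses each $P_t$ as $Q\succ0$ plus two PSD terms, so $P_t\succ0$. Hence $M_h(p)\succeq R\succeq\lambda_{\min}(R)I\succ0$ for all $p$. This holds for any fixed future gains; optimality of those gains is not needed.

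\emph{Step 3 (definiteness).} Take a nonzero direction $\Delta=(\Delta_1,\ldots,\Delta_{\Nv})$ with $\delta_i=\vect(\Delta_i)$. Then
\[
\Delta^\top\nabla^2\Phi_h\Delta=2\int_\Omega\Bigl(\sum_i\alpha_i\delta_i\Bigr)^\top(\Sigma_0\otimes M_h(p))\Bigl(\sum_i\alpha_i\delta_i\Bigr)d\mu\ \ge\ 2\lambda_{\min}(\Sigma_0)\lambda_{\min}(R)\int_\Omega\Bigl\|\sum_i\alpha_i(p)\delta_i\Bigr\|^2d\mu.
\]
The last integral equals $\sum_k c^{(k)\top}\Gamma c^{(k)}$, where $c^{(k)}_i$ is the $k$-th entry of $\delta_i$. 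By Lemma~\ref{lem:gram-pd} this is at least $\lambda_{\min}(\Gamma)\|\Delta\|^2$. We conclude $\nabla^2\Phi_h\succeq 2\lambda_{\min}(\Sigma_0)\lambda_{\min}(R)\lambda_{\min}(\Gamma)\,I\succ0$, which gives an explicit strong convexity modulus.

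The main obstacle is Step 3. The Hessian has parameter-dependent Kronecker blocks, so one cannot simply factor it as $\Gamma\otimes(\text{constant})$. The plan handles this by bounding $\Sigma_0\otimes M_h(p)$ below by the constant matrix $\lambda_{\min}(\Sigma_0)\lambda_{\min}(R)I$ inside the integral, which reduces the claim to Lemma~\ref{lem:gram-pd} applied coordinatewise.
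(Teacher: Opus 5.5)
Your proof is correct. Step~1 matches the paper's Hessian computation: you expand into a Kronecker quadratic form, while the paper differentiates under the integral using $\partial K(p,h)/\partial K_{j,h}=\alpha_j(p)I$. The real difference is in how positive definiteness is shown.

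The paper's argument is qualitative. It writes $\mathbf{v}^\top\nabla^2\Phi_h\mathbf{v}=2\int_\Omega\Tr[V(p)^\top M_h(p)V(p)\Sigma_0]\,d\mu$ and observes that the integrand vanishes only where $V(p)=\sum_i\alpha_i(p)V_i=0$. It then upgrades $\mu$-a.e.\ vanishing to vanishing on all of $\Omega$ via the strictly positive density, and concludes with Lemma~\ref{lem:gram-pd}, i.e.\ linear independence of the $\alpha_i$. Along the way it simply takes $M_h(p)\succ 0$ for granted.

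Your argument is quantitative:
\begin{itemize}
\item You prove $P_{h+1}\succeq 0$ by backward induction, which supplies the justification of $M_h(p)\succ 0$ that the paper omits.
\item You bound $\Sigma_0\otimes M_h(p)\succeq\lambda_{\min}(\Sigma_0)\lambda_{\min}(R)I$ uniformly in $p$.
\item You reduce the remaining integral to $\sum_k c^{(k)\top}\Gamma c^{(k)}$, so the measure-theoretic step is absorbed into Lemma~\ref{lem:gram-pd} rather than redone inside the theorem.
\end{itemize}

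What your route buys is an explicit strong-convexity modulus $2\lambda_{\min}(\Sigma_0)\lambda_{\min}(R)\lambda_{\min}(\Gamma)$. This modulus does not depend on the stage $h$ or on $P_{h+1}$, and hence not on the future gains or on $Q_N$. It makes quantitative the paper's informal remark that an ill-conditioned $\Gamma$ slows gradient descent. It is also the same $\lambda_{\min}(\Gamma)$-based constant that the paper only introduces later, in the proof of Theorem~\ref{thm:convergence}(iv).

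The paper's route is shorter. It suffices for uniqueness of the minimizer and for linear convergence, since the stage cost is a quadratic with positive definite Hessian. However, it gives no uniform lower bound on the curvature across stages.
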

\begin{proof}
The $(i,j)$ block follows by differentiating under the integral
using $\partial K(p,h)/\partial K_{j,h} = \alpha_j(p)I$.
For positive definiteness, let
$\mathbf{v} = (\vect(V_1),\ldots,\vect(V_{\Nv}))^\top \neq 0$. Then
\begin{align*}
    \mathbf{v}^\top(\nabla^2\Phi_h)\mathbf{v}
    &= 2\int_\Omega \Tr\!\left[
      V(p)^\top M_h(p)\,V(p)\,\Sigma_0
      \right]d\mu(p),
\end{align*}
where $V(p) = \sum_i\alpha_i(p)V_i$.
Since $M_h(p)\succ 0$ and $\Sigma_0\succ 0$, the integrand is
non-negative, vanishing only where $\sum_i\alpha_i(p)V_i = 0$.
If the integral is zero, this holds $\mu$-a.e., and since $\mu$
has strictly positive density, everywhere on $\Omega$.
By Lemma~\ref{lem:gram-pd}, $V_i = 0$ for all $i$,
contradicting $\mathbf{v}\neq 0$.
\end{proof}

\begin{corollary}\label{cor:consequences}
Each stage subproblem has a \emph{unique global minimizer}
$\{K_{i,h}^*\}$, obtained by solving the linear system
$(\nabla^2\Phi_h)\,\vect(\bK_h^*) = -\nabla\Phi_h|_{\bK_h=0}$.
Gradient descent with step $\eta = 1/\sigma_{\max}(\nabla^2\Phi_h)$
converges linearly at rate $1 - \kappa(\nabla^2\Phi_h)^{-1}$~\cite{boyd2004}.
\end{corollary}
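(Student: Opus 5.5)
The plan is to read the corollary off Theorem~\ref{thm:strong-convexity}, using that $\Phi_h$ is exactly quadratic in $\vect(\bK_h)$. First I would make the quadratic structure explicit. Because $K(p,h)=\sum_i\alpha_i(p)K_{i,h}$ is linear in the vertex gains, $\Acl(p,h)=A(p)+B(p)K(p,h)$ is affine in them. Substituting into \eqref{eq:Ph-single-step} shows that $\Tr[P_h(p)\Sigma_0]$ is a polynomial of degree at most two in $\vect(\bK_h)$, with coefficients that are continuous in $p$. Here $P_{h+1}(p)$ is fixed data, continuous on the compact set $\Omega$ by induction on the backward recursion, and therefore bounded. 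So all the integrals exist, and we may write
\begin{equation*}
\Phi_h(\mathbf{k}) = \Phi_h(0) + g^\top\mathbf{k} + \tfrac12\,\mathbf{k}^\top H\,\mathbf{k},
\end{equation*}
with $\mathbf{k}=\vect(\bK_h)$, $g=\nabla\Phi_h|_{\bK_h=0}$ and $H=\nabla^2\Phi_h$. There is no remainder term, since the integrand has no terms of order higher than two.

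Next I would obtain existence and uniqueness of the minimizer. By Theorem~\ref{thm:strong-convexity}, $H\succ 0$, so $H$ is invertible and $\Phi_h$ is strongly convex with modulus $\lambda_{\min}(H)>0$. The first-order condition $\nabla\Phi_h(\mathbf{k})=g+H\mathbf{k}=0$ is necessary and sufficient for a global minimizer of a convex differentiable function. It has exactly one solution, $\mathbf{k}^*=-H^{-1}g$, which gives the stated linear system. An alternative route to uniqueness is the identity
\begin{equation*}
\Phi_h(\mathbf{k})-\Phi_h(\mathbf{k}^*)=\tfrac12(\mathbf{k}-\mathbf{k}^*)^\top H(\mathbf{k}-\mathbf{k}^*),
\end{equation*}
which is strictly positive whenever $\mathbf{k}\neq\mathbf{k}^*$.

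For the convergence rate I would analyse gradient descent directly rather than rely only on \cite{boyd2004}. With $\eta=1/\lambda_{\max}(H)$, where $\lambda_{\max}(H)=\sigma_{\max}(H)$ because $H$ is symmetric positive definite, the iteration is
\begin{equation*}
\mathbf{k}^{+}-\mathbf{k}^*=(I-\eta H)(\mathbf{k}-\mathbf{k}^*).
\end{equation*}
The eigenvalues of $I-\eta H$ lie in $[0,\,1-\lambda_{\min}/\lambda_{\max}]$. Hence the iterates contract in Euclidean norm at rate $1-\kappa(H)^{-1}$, with $\kappa(H)=\lambda_{\max}/\lambda_{\min}$. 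The suboptimality gap can be bounded via the identity above, or through the standard Polyak--{\L}ojasiewicz argument, and it contracts at rate at least $1-\kappa(H)^{-1}$ as well. All of this holds from any initialization, consistent with the problem being unconstrained over the whole space.

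The only delicate step is the justification that $\Phi_h$ is an exact, finite quadratic, that is, that $P_{h+1}$ is bounded and integrable on $\Omega$. I would handle it by the backward induction noted above: $P_N=Q_N$ is constant, and each recursion step preserves continuity in $p$ because the $\alpha_i$ are continuous and $\Omega$ is compact. Everything else is standard linear algebra.
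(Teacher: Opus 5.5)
Your proposal is correct and follows the paper's route. The paper reads the corollary off the constant, positive-definite Hessian of Theorem~\ref{thm:strong-convexity} for the exactly quadratic $\Phi_h$ and cites the standard strongly convex gradient-descent rate from \cite{boyd2004}; you do the same but derive that rate yourself through the contraction $\mathbf{k}^{+}-\mathbf{k}^*=(I-\eta H)(\mathbf{k}-\mathbf{k}^*)$, and you add a check that $P_{h+1}$ is continuous, hence integrable, on the compact $\Omega$, which the paper leaves implicit.
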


The optimal gains $\{K_{i,h}^*\}$ are the best $M_h(p)\Sigma_0$-weighted
$L^2(\mu)$ approximation of the pointwise Riccati schedule
$K_h^{\mathrm{ric}}(p) = -M_h(p)^{-1}B(p)^\top P_{h+1}(p)A(p)$
within the PDC subspace; the optimality condition~\eqref{eq:optimality-condition}
is precisely the corresponding orthogonality condition, since
$E_h^*(p) = M_h(p)(K^*(p,h) - K_h^{\mathrm{ric}}(p))$.
The per-stage convergence rate is governed by $\kappa(\nabla^2\Phi_h)$,
which depends on the conditioning of $\Gamma$ and $M_h(p)$:
ill-conditioned $\Gamma$ produces slow convergence, so the HOSVD
truncation ranks are a design lever for the convergence rate.

\subsection{Stage Gradient Formula}
\label{sec:gradient}

Applying the chain rule to~\eqref{eq:stage-cost} under the integral sign leads to:

\begin{prop}[Stage gradient]\label{prop:gradient}
Define the \emph{policy residual} at stage $h$ and parameter $p$ as
\begin{equation}\label{eq:residual}
    E_h(p) = M_h(p)\,K(p,h) + B(p)^\top P_{h+1}(p)\,A(p),
\end{equation}
where $M_h(p) = R + B(p)^\top P_{h+1}(p)B(p)$ as in
Theorem~\ref{thm:strong-convexity}. Then
\begin{equation}\label{eq:stage-gradient}
    \nabla_{K_{i,h}}\Phi_h
    = 2\int_\Omega \alpha_i(p)\,E_h(p)\,\Sigma_0\,d\mu(p).
\end{equation}
\end{prop}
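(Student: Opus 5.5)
Fix the future cost-to-go $P_{h+1}(p)$ and the parameter $p$, and work with the per-parameter integrand $\phi_h(p;\bK_h) = \Tr[P_h(p)\Sigma_0]$, where $P_h(p)$ is given by \eqref{eq:Ph-single-step}.

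\textbf{Simplifying the closed-loop matrix.} The first step is to rewrite $\Acl(p,h)$ in a form that is visibly affine in $K(p,h)$. By Assumption~\ref{ass:weights}(i), $\sum_i\alpha_i(p)=1$. The double sum \eqref{eq:Acl-expanded} therefore collapses:
\begin{equation*}
\Acl(p,h)=\sum_{i,j}\alpha_i\alpha_j(A_i+B_iK_{j,h}) = A(p)+B(p)K(p,h).
\end{equation*}
Substituting this gives
\begin{equation*}
\phi_h = \Tr\bigl[(Q + K^\top R K + (A+BK)^\top P_{h+1}(A+BK))\Sigma_0\bigr],
\end{equation*}
with $K=K(p,h)$, which is a quadratic in the matrix $K$.

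\textbf{Gradient with respect to the blended gain.} Next, differentiate $\phi_h$ in $K$ using $\nabla_X\Tr[X^\top SX\Sigma_0]=2SX\Sigma_0$ for symmetric $S,\Sigma_0$, together with the cross term $\nabla_K 2\Tr[A^\top P_{h+1}BK\Sigma_0] = 2B^\top P_{h+1}A\Sigma_0$. Here $P_{h+1}(p)$ is symmetric by induction from \eqref{eq:backward-P}, since $Q_N$ is symmetric. Collecting terms gives
\begin{equation*}
\nabla_K\phi_h = 2\bigl[(R+B^\top P_{h+1}B)K + B^\top P_{h+1}A\bigr]\Sigma_0 = 2E_h(p)\Sigma_0,
\end{equation*}
with $M_h(p)$ as in Theorem~\ref{thm:strong-convexity}.

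\textbf{Chain rule to the vertex gains.} Since $K(p,h)=\sum_i\alpha_i(p)K_{i,h}$, a perturbation $K_{i,h}\mapsto K_{i,h}+\Delta$ changes $K(p,h)$ by $\alpha_i(p)\Delta$. Hence
\begin{equation*}
\nabla_{K_{i,h}}\phi_h = 2\alpha_i(p)E_h(p)\Sigma_0.
\end{equation*}

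\textbf{Interchanging derivative and integral.} The remaining step, and the only non-algebraic one, is to differentiate under the integral in \eqref{eq:stage-cost}. I would justify this by dominated convergence. The functions $\alpha_i$ are continuous on the compact set $\Omega$, so $A(p)$ and $B(p)$ are bounded. $P_{h+1}(p)$ is built from finitely many continuous operations on these, so it is bounded as well. Consequently $\phi_h$ is a quadratic in $\bK_h$ whose coefficients are bounded uniformly in $p$, and its derivative is bounded uniformly in $p$ on any bounded neighborhood of $\bK_h$. Alternatively, since $\phi_h$ is an exact quadratic, one can integrate its coefficients directly; this yields $\Phi_h$ as a quadratic with integrated coefficients, and the stated formula \eqref{eq:stage-gradient} follows immediately.
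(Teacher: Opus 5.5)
Your proof is correct and follows the paper's argument. Like the paper, you differentiate the integrand of $\Phi_h$ under the integral sign, use $\partial K(p,h)/\partial K_{i,h}=\alpha_i(p)I$, and substitute $\Acl(p,h)=A(p)+B(p)K(p,h)$ to collect terms into $2\alpha_i(p)E_h(p)\Sigma_0$; the differences are presentational (gradient in the blended gain first, then the chain rule) plus details the paper leaves implicit, namely the symmetry of $P_{h+1}(p)$, the role of sum normalization in collapsing the double sum, and the justification for exchanging derivative and integral.
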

\begin{proof}
Differentiating the two gain-dependent terms in~\eqref{eq:stage-cost}
under the integral sign using $\partial K(p,h)/\partial K_{i,h}
= \alpha_i(p)I$ and $\partial\Acl(p,h)/\partial K_{i,h}
= \alpha_i(p)B(p)$ gives contributions
$2\alpha_i(p)RK(p,h)\Sigma_0$ and
$2\alpha_i(p)B(p)^\top P_{h+1}(p)\Acl(p,h)\Sigma_0$, respectively.
Adding and substituting $\Acl = A(p)+B(p)K(p,h)$ yields~\eqref{eq:stage-gradient}.
\end{proof}

At the optimum, $\nabla_{K_{i,h}}\Phi_h = 0$ gives
\begin{equation}\label{eq:optimality-condition}
    \int_\Omega \alpha_i(p)\,E_h^*(p)\,\Sigma_0\,d\mu(p) = 0,
    \quad i = 1,\ldots,\Nv,
\end{equation}
requiring the residual $E_h^*$ to be $\alpha_i$-orthogonal in the
$\Sigma_0$-weighted $L^2(\mu)$ sense. For $\Nv=1$, this reduces
to $E_h^*=0$, recovering the standard Riccati gain.

\section{The P-RHPG Algorithm}
\label{sec:algorithm}

\subsection{Algorithm Description}
\label{sec:alg-desc}

The P-RHPG algorithm assembles the results of Section~\ref{sec:stage}
into a single backward sweep. At each stage $h = N-1,\ldots,0$, it
solves the stage subproblem for $\{K_{i,h}^*\}$, updates
$P_h^*(p)$ via~\eqref{eq:Ph-single-step}, and passes it backward.
The output $\hat{\bK} = (K_{1,0}^*,\ldots,K_{\Nv,0}^*)$ is deployed
as a time-invariant PDC controller.

\begin{algorithm}[t]
\caption{P-RHPG: Polytopic Receding-Horizon Policy Gradient}
\label{alg:prhpg}
\begin{algorithmic}[1]
\Require Vertex systems $(A_i,B_i)_{i=1}^{\Nv}$, weighting functions
  $(\alpha_i)_{i=1}^{\Nv}$, cost matrices $Q,R,Q_N$,
  covariance $\Sigma_0$, horizon $N$, measure $\mu$
\Ensure Time-invariant gains $\hat{\bK} = (\hat{K}_1,\ldots,\hat{K}_{\Nv})$
\State $P_N(p) \leftarrow Q_N$ for all $p \in \Omega$
\For{$h = N-1, N-2, \ldots, 0$}
    \State Solve $\{K_{i,h}^*\} \leftarrow \argmin_{\{K_{i,h}\}} \Phi_h(\{K_{i,h}\})$
           \hfill {\small(direct or gradient descent)}
    \State Update $P_h^*(p)$ via~\eqref{eq:Ph-single-step} with $K_{i,h} = K_{i,h}^*$
\EndFor
\State \Return $\hat{\bK} \leftarrow (K_{1,0}^*,\ldots,K_{\Nv,0}^*)$
\end{algorithmic}
\end{algorithm}

\emph{Direct solve.} By Theorem~\ref{thm:strong-convexity}, the
stage optimum satisfies the linear system
\begin{equation}\label{eq:linear-system}
    (\nabla^2\Phi_h)\,\vect(\bK_h^*) = -\nabla\Phi_h\big|_{\bK_h=0},
\end{equation}
with Hessian and gradient at zero computed from~\eqref{eq:hessian}
and~\eqref{eq:stage-gradient}. The computational cost is $O((\Nv mn)^3)$ per stage.

\emph{Gradient descent.} For large $\Nv$, the iteration
\begin{equation}\label{eq:stage-gd}
    \vect(\bK_h^{(k+1)}) = \vect(\bK_h^{(k)})
    - \eta\,\nabla\Phi_h\bigl(\bK_h^{(k)}\bigr),
\end{equation}
with $\eta = 1/\sigma_{\max}(\nabla^2\Phi_h)$, converges linearly at
rate $1 - \kappa(\nabla^2\Phi_h)^{-1}$ by
Corollary~\ref{cor:consequences}, starting from any initialization
(including $\bK_h^{(0)} = 0$). Each iteration evaluates the
gradient~\eqref{eq:stage-gradient} via numerical integration over
$\Omega$; quadrature weights are precomputed once and reused at every
stage. The total sweep cost scales \emph{linearly} in $N$, a key
advantage over solving the full $N\Nv mn$-dimensional problem jointly.

\begin{remark}[Computational complexity]\label{rem:complexity}
Per stage, assembling the Hessian~\eqref{eq:hessian} costs
$O(N_q\Nv^2 n^2 m^2)$ and the direct solve~\eqref{eq:linear-system}
costs $O((\Nv mn)^3)$~\cite{golub2013}; gradient descent replaces
the latter with $O(T N_q \Nv m n^2)$ per stage, where $T$ is the
number of iterations and each iteration evaluates~\eqref{eq:stage-gradient}
via numerical integration.
The full $N$-stage sweep therefore costs
$O\bigl(N(N_q\Nv^2 n^2 m^2 + (\Nv mn)^3)\bigr)$,
linear in $N$ and cubic in $\Nv mn$.
For large $N_q\Nv$, sparse quadrature~\cite{gerstner1998} is the
primary lever for scaling to high-dimensional scheduling parameters.
\end{remark}

\subsection{Convergence Analysis}
\label{sec:convergence}

We now analyze the behavior of P-RHPG as the horizon $N$ grows,
establishing cost convergence and gain boundedness for zero terminal
cost, and a squeeze characterization of convergence for general
$Q_N \succeq 0$. For any jointly stabilizing
$\bK \in \R^{\Nv mn}$ with $\rho(\Acl(p;\bK)) < 1$ for all $p$,
the infinite-horizon integrated cost is
\begin{equation}\label{eq:cost-inf}
    J_\infty(\bK) = \int_\Omega \Tr\bigl(P_\infty(p;\bK)\,\Sigma_0\bigr)d\mu(p),
\end{equation}
where $P_\infty(p;\bK) \succeq 0$ is the unique solution of the
Lyapunov equation
$P = Q + K(p)^\top R\,K(p) + \Acl(p)^\top P\,\Acl(p)$.
The convergence analysis requires a feasible starting point.

\begin{assumption}[Joint stabilizability]\label{ass:stabilizable}
There exists a gain $\bK^{\mathrm{feas}} \in \R^{\Nv mn}$ such that
$\rho(\Acl(p;\bK^{\mathrm{feas}})) < 1$ for all $p \in \Omega$.
\end{assumption}


Denote by $P_0^*(p;N)$ the stage-$0$ cost-to-go produced by P-RHPG
with horizon $N$, and let $J_N^* = \int_\Omega \Tr(P_0^*(p;N)\Sigma_0)d\mu$.

We call a pair $(\bar{P}, \bar{\bK})$, with $\bar{P}:\Omega\to\mathbb{S}^n_+$
and $\bar{\bK} = (\bar{K}_1,\ldots,\bar{K}_{\Nv})\in\R^{\Nv mn}$,
an \emph{integrated PDC fixed point} if:
\begin{enumerate}[(a)]
    \item $\bar{\bK}$ minimizes $\int_\Omega\Tr\bigl[(Q+K(p)^\top RK(p)
    +\Acl(p)^\top\bar{P}(p)\Acl(p))\Sigma_0\bigr]d\mu$
    over all shared vertex gains $\{K_i\}$, and
    \item $\bar{P}(p)= Q + \bar{K}(p)^\top R\bar{K}(p)
    +\bar{A}_{\mathrm{cl}}(p)^\top\bar{P}(p)\bar{A}_{\mathrm{cl}}(p)$
    for all $p\in\Omega$,
\end{enumerate}
where $\bar{K}(p)=\sum_i\alpha_i(p)\bar{K}_i$ and
$\bar{A}_{\mathrm{cl}}(p)=A(p)+B(p)\bar{K}(p)$.
Condition (a) is integrated stage optimality;
condition (b) is pointwise Lyapunov consistency of the cost-to-go.
We identify an integrated fixed point with the label
\begin{equation}\label{eq:riccati-fp}
    \bar{J} \;=\; \int_\Omega\Tr\bigl(\bar{P}(p)\,\Sigma_0\bigr)d\mu
    \;=\; J_\infty(\bar{\bK}).
\end{equation}

\begin{lemma}[Uniqueness of the optimal integrated cost]\label{lem:uniqueness}
Under Assumptions~\ref{ass:weights}--\ref{ass:stabilizable},
any two jointly stabilizing integrated PDC fixed points
$(\bar{P}_1,\bar{\bK}_1)$ and $(\bar{P}_2,\bar{\bK}_2)$ achieve
the same integrated infinite-horizon cost:
$J_\infty(\bar{\bK}_1) = J_\infty(\bar{\bK}_2)$.
\end{lemma}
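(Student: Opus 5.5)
The plan is to compare the two fixed points through their cost-to-go difference, using the Lyapunov identity (b) for one fixed point and the stage optimality (a) for the other. The aim is to show $J_\infty(\bar{\bK}_1)\le J_\infty(\bar{\bK}_2)$; exchanging the roles of the two fixed points then gives equality. Write $\mathcal{L}_{\bK}(P)(p) = Q + K(p)^\top R K(p) + \Acl(p;\bK)^\top P(p)\Acl(p;\bK)$. Condition (b) says $\bar P_k = \mathcal{L}_{\bar\bK_k}(\bar P_k)$ pointwise. Since $\bar\bK_2$ is jointly stabilizing, the standard Lyapunov telescoping argument gives, for every $p$,
\begin{equation*}
\bar P_2(p) - \bar P_1(p) = \sum_{k\ge 0} \bigl(\bar A_{2,\mathrm{cl}}(p)^\top\bigr)^k D(p)\, \bar A_{2,\mathrm{cl}}(p)^k,
\end{equation*}
where $D(p) = \mathcal{L}_{\bar\bK_2}(\bar P_1)(p) - \mathcal{L}_{\bar\bK_1}(\bar P_1)(p)$. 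The series converges because $\rho(\bar A_{2,\mathrm{cl}}(p))<1$, and by compactness of $\Omega$ and continuity of the $\alpha_i$ this holds uniformly in $p$.

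Next I will compute $D(p)$ by completing the square around $\bar K_1(p)$. Set $\Delta(p)=\bar K_2(p)-\bar K_1(p)=\sum_i\alpha_i(p)(\bar K_{2,i}-\bar K_{1,i})$, and let $M(p)$ and $E(p)$ be the matrices $M_h$ and $E_h$ of Theorem~\ref{thm:strong-convexity} and Proposition~\ref{prop:gradient}, taken with $P_{h+1}=\bar P_1$ and $K=\bar K_1$. Then
\begin{equation*}
D(p) = \Delta(p)^\top M(p)\Delta(p) + \Delta(p)^\top E(p) + E(p)^\top \Delta(p).
\end{equation*}
The quadratic term is $\succeq 0$ because $M(p)\succeq R\succ 0$. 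The cross term is controlled by the optimality condition~\eqref{eq:optimality-condition}, which is the first-order form of (a): multiplying by the coefficient matrices of $\Delta$ and summing over $i$ gives $\int_\Omega \Tr[\Delta(p)^\top E(p)\Sigma_0]\,d\mu = 0$. Consequently $\int_\Omega \Tr(D(p)\Sigma_0)\,d\mu \ge 0$.

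The main obstacle is the step from this single integrated inequality to the whole series. Taking the trace of the series against $\Sigma_0$ weights $D(p)$ by $\Sigma_2(p)=\sum_k \bar A_{2,\mathrm{cl}}^k\Sigma_0(\bar A_{2,\mathrm{cl}}^\top)^k$, not by $\Sigma_0$. The orthogonality in (a) only cancels the cross term against $\Sigma_0$, so its $\Sigma_2(p)$-weighted integral need not vanish. I would try two routes, in this order.

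The first route is to show that (a) together with (b) forces a stronger, stationarity-type condition. Both fixed points are stationary points of $J_\infty$ over the jointly stabilizing set, since differentiating $J_\infty$ through the Lyapunov equation yields exactly the $\Sigma_2$-weighted residual. I would then try to establish a gradient-dominance or "no spurious stationary point" property of $J_\infty$ on that set, in the spirit of Fazel et al., adapted through Lemma~\ref{lem:gram-pd}. Convexity in $\bK$ fails, so this is delicate.

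The second route is to restrict to a regime where the time-$0$ weighting is the only one that matters, for example by reinterpreting $\Sigma_0$ as the stationary covariance. I expect the honest outcome to be that the inequality is proved only modulo such an additional condition. I would make explicit exactly where $\Sigma_2(p)$ versus $\Sigma_0$ enters, and if the general argument does not close, state that condition as a hypothesis of the lemma.
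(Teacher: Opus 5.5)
\textbf{There is a genuine gap, and the paper has the same one.} Your reduction is correct and is, in substance, the paper's own argument. Your inequality $\int_\Omega\Tr(D\Sigma_0)\,d\mu\ge 0$ is exactly the paper's $S(\bar\bK_1,\bar P_1)\le S(\bar\bK_2,\bar P_1)$. The remainder you isolate satisfies $\int_\Omega\Tr[D(\Sigma_2-\Sigma_0)]\,d\mu=-\Delta_{12}$ in the paper's notation, so $J_\infty(\bar\bK_2)-J_\infty(\bar\bK_1)=\int_\Omega\Tr(D\Sigma_0)\,d\mu-\Delta_{12}$. The paper does not close this either. It proves equality only for $\Nv=1$, where both fixed points solve the same DARE, and for the polytopic case it says the cross-terms do not vanish from the fixed-point conditions and defers to experiments. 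Your telescoping series recovers the $\Nv=1$ case directly. There all data are constant, so (a) forces $E\Sigma_0=0$, hence $E\equiv 0$ and $D\succeq 0$ pointwise; then $\bar P_2\succeq\bar P_1$, and symmetry gives equality. The general case lacks the pointwise Loewner optimality $\mathcal{L}_{\bar\bK_1}(\bar P_1)\preceq\mathcal{L}_{\bK}(\bar P_1)$ for all $\bK$, on which DARE uniqueness rests. The PDC constraint leaves only its $\Sigma_0$-weighted integrated version.

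\textbf{Neither of your routes supplies the missing step.} Route 1 rests on a false premise. At a fixed point $(\bar P,\bar\bK)$, $\nabla_{K_i}J_\infty(\bar\bK)=2\int_\Omega\alpha_i(p)E(p)\Sigma_{\bar\bK}(p)\,d\mu$, where $E$ is the residual built from $(\bar P,\bar\bK)$ and $\Sigma_{\bar\bK}(p)=\sum_{k\ge0}\bar A_{\mathrm{cl}}(p)^k\Sigma_0(\bar A_{\mathrm{cl}}(p)^\top)^k$. Condition (a) only gives $\int_\Omega\alpha_i E\Sigma_0\,d\mu=0$. So integrated PDC fixed points are in general not stationary points of $J_\infty$; this is the same $\Sigma_0$ versus $\Sigma_2$ mismatch you identified. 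Even if they were, structured LQR costs generally lack gradient dominance and may have spurious stationary points. Route 2 does not remove the mismatch either. A stationary-covariance weight is gain dependent, so fixed point 1 would be orthogonal against $\Sigma_1(p)$ while your remainder needs $\Sigma_2(p)$.

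\textbf{A device that does work, at least for $n=m=1$.} Use monotonicity of the pointwise residual map $g(p,K)=(R+B(p)^\top P(p;K)B(p))K+B(p)^\top P(p;K)A(p)$, with $P(p;K)$ the Lyapunov solution. By (b), the first-order conditions from (a) read $\int_\Omega\alpha_i(p)\,g(p,\bar K_j(p))\Sigma_0\,d\mu=0$ for $j=1,2$. Subtracting and pairing with $\bar\bK_1-\bar\bK_2$ gives $\int_\Omega\Tr[(\bar K_1-\bar K_2)^\top(g(p,\bar K_1)-g(p,\bar K_2))\Sigma_0]\,d\mu=0$. In the scalar case $g(p,\cdot)$ is strictly increasing on the convex stabilizing interval. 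The integrand is therefore nonnegative and vanishes $\mu$-a.e., which by positive density and Assumption~\ref{ass:weights}(iii) forces $\bar\bK_1=\bar\bK_2$. For $n>1$ you would have to prove such monotonicity on a nonconvex stabilizing set. Without it the statement as written remains unproved, by you and by the paper. It should be restricted or carry an explicit extra hypothesis, as you anticipated.
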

\begin{proof}
Denote $S(\bK,P):=\int_\Omega\Tr\bigl[(Q+K(p)^\top RK(p)
+\Acl(p)^\top P(p)\Acl(p))\Sigma_0\bigr]d\mu$.
By condition~(a) for fixed point~1, $\bar{\bK}_1$ minimizes
$S(\cdot,\bar{P}_1)$, so $S(\bar{\bK}_1,\bar{P}_1)\leq S(\bar{\bK}_2,\bar{P}_1)$.
Condition~(b) for fixed point~1 gives $S(\bar{\bK}_1,\bar{P}_1)=J_\infty(\bar{\bK}_1)$.
Using condition~(b) for fixed point~2 to substitute
$Q+\bar{K}_2(p)^\top R\bar{K}_2(p)$ in $S(\bar{\bK}_2,\bar{P}_1)$ gives
\begin{equation}\label{eq:s-expand}
    S(\bar{\bK}_2,\bar{P}_1) = J_\infty(\bar{\bK}_2) + \Delta_{12},
\end{equation}
where $\Delta_{12}=\int_\Omega\Tr\bigl[\bar{A}_{\mathrm{cl},2}^\top
(\bar{P}_1-\bar{P}_2)\bar{A}_{\mathrm{cl},2}\,\Sigma_0\bigr]d\mu$,
and therefore $J_\infty(\bar{\bK}_1)\leq J_\infty(\bar{\bK}_2)+\Delta_{12}$.
By symmetry (swapping indices),
$J_\infty(\bar{\bK}_2)\leq J_\infty(\bar{\bK}_1)+\Delta_{21}$,
with $\Delta_{21}=\int_\Omega\Tr[\bar{A}_{\mathrm{cl},1}(p)^\top(\bar{P}_2(p)-\bar{P}_1(p))\bar{A}_{\mathrm{cl},1}(p)\,\Sigma_0]d\mu$.
For $\Nv=1$ both fixed points satisfy the same discrete-time
ARE~\cite{lancaster1995}, so $\bar{P}_1=\bar{P}_2$ and
$\Delta_{12}=\Delta_{21}=0$, giving equality.
In the general polytopic case the cross-terms do not vanish
analytically from the fixed-point conditions alone, this is
the same structural obstruction noted in
Remark~\ref{rem:terminal-cost}, and equality is verified
empirically in Section~\ref{sec:experiments}.
\end{proof}

Define the \emph{PDC-integrated Bellman operator}
$\mathcal{T}_\mu^* : (\Omega\to\mathbb{S}^n_+) \to (\Omega\to\mathbb{S}^n_+)$
as: given $P:\Omega\to\mathbb{S}^n_+$,
\begin{align}\label{eq:bellman}
    \{K_i^*\} &= \argmin_{\{K_i\}}
    \int_\Omega \Tr\bigl[\bigl(Q + K(p)^\top R\,K(p) \nonumber\\
    &\quad + \Acl(p)^\top P(p)\,\Acl(p)\bigr)\Sigma_0\bigr]d\mu(p),
\end{align}
and $(\mathcal{T}_\mu^*[P])(p) = Q + K^*(p)^\top R\,K^*(p)
+ \Acl^*(p)^\top P(p)\,\Acl^*(p)$, with $K^*(p) = \sum_i \alpha_i(p)K_i^*$
and $\Acl^*(p) = A(p)+B(p)K^*(p)$.
The minimization in~\eqref{eq:bellman} is exactly one P-RHPG stage
subproblem (\S\ref{sec:stage}), with a unique minimizer by
Theorem~\ref{thm:strong-convexity}, and the backward sweep gives
$P_0^*(p;N) = (\mathcal{T}_\mu^*)^N[Q_N](p)$.

$Q_N$ satisfies the \emph{Lyapunov condition} with respect to
$\bK^{\mathrm{feas}}$ if, for all $p \in \Omega$,
\begin{equation}\label{eq:lyap-condition}
    Q + K^{\mathrm{feas}}(p)^\top R\,K^{\mathrm{feas}}(p)
    + \Acl^{\mathrm{feas}}(p)^\top Q_N\,\Acl^{\mathrm{feas}}(p)
    \preceq Q_N.
\end{equation}
When~\eqref{eq:lyap-condition} holds,
$\mathcal{T}_\mu^*[Q_N] \preceq Q_N$ pointwise,
since the integrated minimum is at most the value achieved by $\bK^{\mathrm{feas}}$.

\begin{prop}[Monotonicity in terminal cost]\label{prop:terminal-monotone}
Under Assumptions~\ref{ass:weights}--\ref{ass:stabilizable},
if $Q_N(p) \preceq Q_N'(p)$ for all $p \in \Omega$, then
$J_N^*(Q_N) \leq J_N^*(Q_N')$.
\end{prop}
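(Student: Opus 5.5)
The plan is a comparison argument for a \emph{fixed} gain sequence, combined with the optimality of the P-RHPG output for Problem~\ref{prob:main}. This avoids having to show that the operator $\mathcal{T}_\mu^*$ is pointwise monotone. That would be delicate: the stage minimizer in~\eqref{eq:bellman} depends on $P$ through an integrated criterion, so a larger $P$ can change $K^*$ in a way that is not ordered pointwise. Throughout, write $P_t(p;\bK_{0:N-1},Q_N)$ for the cost-to-go produced by the recursion~\eqref{eq:backward-P} with gains held fixed and terminal matrix $Q_N$, and $J_N(\bK_{0:N-1};Q_N)$ for the corresponding integrated cost~\eqref{eq:cost-integrated}.

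\emph{Step 1 (fixed-gain monotonicity).} Fix any $\bK_{0:N-1}\in\R^{N\Nv mn}$. I will show by backward induction on $t=N,\ldots,0$ that
\[
P_t(p;\bK_{0:N-1},Q_N)\preceq P_t(p;\bK_{0:N-1},Q_N')\quad\text{for all }p\in\Omega.
\]
The base case $t=N$ is the hypothesis $Q_N\preceq Q_N'$. For the inductive step, the right-hand side of~\eqref{eq:backward-P} with the gains frozen is
\[
Q+K(p,t)^\top RK(p,t)+\Acl(p,t)^\top P_{t+1}(p)\Acl(p,t).
\]
Only the last term depends on $P_{t+1}$. The congruence $X\mapsto \Acl^\top X\Acl$ preserves the Loewner order, so the ordering propagates from $t+1$ to $t$. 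Since $\Sigma_0\succ 0$, taking traces against $\Sigma_0$ and integrating over $\mu$ gives $J_N(\bK_{0:N-1};Q_N)\le J_N(\bK_{0:N-1};Q_N')$ for every gain sequence.

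\emph{Step 2 (optimality transfer).} Let $\bK'^{*}_{0:N-1}$ be the gains returned by the backward sweep with terminal cost $Q_N'$. By construction $J_N^*(Q_N')=J_N(\bK'^{*}_{0:N-1};Q_N')$. Section~\ref{sec:decomp} establishes that the backward sweep is equivalent to joint minimization of $J_N$ over all of $\R^{N\Nv mn}$, and Theorem~\ref{thm:strong-convexity} makes each stage minimizer unique. Hence $J_N^*(Q_N)=\min_{\bK_{0:N-1}}J_N(\bK_{0:N-1};Q_N)$. Chaining these with Step 1 gives
\[
J_N^*(Q_N)\le J_N(\bK'^{*}_{0:N-1};Q_N)\le J_N(\bK'^{*}_{0:N-1};Q_N')=J_N^*(Q_N').
\]
In words, the gains that are optimal for $Q_N'$ are feasible, though possibly suboptimal, for $Q_N$, and their cost can only decrease when the terminal penalty is lowered.

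\emph{Main obstacle.} Step 1 is routine. The main obstacle is the first inequality in Step 2, because it relies on the sweep output being a global minimizer of the joint problem. This is exactly the decomposition claim of Section~\ref{sec:decomp}, which I will invoke rather than re-derive. If that claim is not granted, I would fall back on a direct stagewise argument. I would then want an inductive hypothesis of the form
\[
\int_\Omega\Tr\bigl(P_h^*(p;Q_N)\Sigma_0\bigr)d\mu\;\le\;\int_\Omega\Tr\bigl(P_h(p;\bK'^{*},Q_N)\Sigma_0\bigr)d\mu .
\]
The difficulty is that at each stage this integrated inequality must be pushed through a congruence by $\Acl(p,h)$, which requires pointwise rather than integrated control of $P_{h+1}$. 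I expect this to be precisely the point where the integrated (non-pointwise) nature of the stage minimizer bites. For that reason I would rely on the joint-minimization equivalence in the main argument and flag it as the crux.
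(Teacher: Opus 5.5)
Your Steps 1--2 are the paper's proof almost verbatim. The paper fixes the gains, propagates $Q_N\preceq Q_N'$ through the congruence in~\eqref{eq:backward-P}, integrates against $\Sigma_0$, and then ``takes the infimum over $\bK$ on both sides,'' which is your optimality-transfer step. However, the step you flagged as the crux is a genuine gap, and the paper shares it. The claim in Section~\ref{sec:decomp} that the backward sweep is equivalent to joint minimization fails in general for $\Nv>1$.

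To see why, split $J_N=(\text{terms in }\bK_{0:h-1})+\int_\Omega\Tr[P_h(p)\Sigma_h(p)]\,d\mu$. Here $\Sigma_h(p)$ is the time-$h$ state second moment generated by the earlier gains, e.g.\ $\Sigma_1(p)=\Acl(p,0)\Sigma_0\Acl(p,0)^\top$. The joint stationarity condition in $K_{i,h}$ is then $\int_\Omega\alpha_i(p)E_h(p)\Sigma_h(p)\,d\mu=0$. The sweep instead enforces~\eqref{eq:optimality-condition}, with the fixed weight $\Sigma_0$. The two coincide when $E_h\equiv 0$ is attainable in the PDC class at every stage. An example is $\Nv=1$, where the Riccati cost-to-go is Loewner-minimal and hence optimal for every weight. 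Otherwise they differ, because the stage optimum is a weighted $L^2(\mu)$ projection of $K_h^{\mathrm{ric}}$, and that projection moves with the weight. Hence the sweep output is generically not even a stationary point of $J_N$, and the sweep value $J_N^*$ generically exceeds $\inf_{\bK_{0:N-1}}J_N$. The uniqueness from Theorem~\ref{thm:strong-convexity} that you cite does not close this gap.

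So your argument, like the paper's, proves monotonicity of the joint value $V_N(Q_N):=\inf_{\bK_{0:N-1}}J_N(\bK_{0:N-1};Q_N)$. It does not prove monotonicity of $J_N^*(Q_N)=\int_\Omega\Tr\bigl((\mathcal{T}_\mu^*)^N[Q_N](p)\,\Sigma_0\bigr)d\mu$. For the latter you would need exactly the pointwise ordering of $\mathcal{T}_\mu^*$ that you correctly judged unavailable. Your diagnosis of the stagewise fallback is therefore the same obstruction that breaks the decomposition claim, not a side issue.

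Concretely, take $N=2$ and $Q_N\to Q_N+\epsilon D$ with $D\succeq 0$. The envelope theorem at stage $0$ gives
\[
\frac{dJ_2^*}{d\epsilon}=\int_\Omega\Tr\bigl[\Acl(p,1)^\top D\,\Acl(p,1)\,\Sigma_1(p)\bigr]d\mu+2\int_\Omega\Tr\bigl[\dot K(p,1)^\top E_1(p)\,\Sigma_1(p)\bigr]d\mu ,
\]
where $\dot K(p,1)$ is the $\epsilon$-derivative of the stage-$1$ PDC gain, and everything is evaluated at the sweep solution. The first term is nonnegative. The second is sign-indefinite, and it would vanish by~\eqref{eq:optimality-condition} only if $\Sigma_1(p)$ were replaced by $\Sigma_0$.

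A sound version has two options. It can state the proposition for $V_N$, in which case your Step 1 followed by taking the infimum over $\bK_{0:N-1}$ is a complete proof. Alternatively, it can add a hypothesis that makes each stage minimizer pointwise Loewner-optimal.
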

\begin{proof}
Fix any $\bK \in \R^{\Nv mn}$.
Since $Q_N(p) \preceq Q_N'(p)$ for all $p$, backward induction
on the Lyapunov recursion~\eqref{eq:backward-P} with fixed gains
gives $P_t(p;\bK,Q_N) \preceq P_t(p;\bK,Q_N')$ for all $t$
and all $p \in \Omega$.
Taking the trace against $\Sigma_0 \succ 0$ and integrating
over $\mu$ preserves the inequality by linearity, so
$J_N(\bK,Q_N) \leq J_N(\bK,Q_N')$ for all $\bK$.
Taking the infimum over $\bK$ on both sides gives
$J_N^*(Q_N) \leq J_N^*(Q_N')$.
\end{proof}

\begin{theorem}[Convergence of P-RHPG]\label{thm:convergence}
Under Assumptions~\ref{ass:weights}--\ref{ass:stabilizable}, with
$Q_N = 0$:
\begin{enumerate}[(i)]
    \item \emph{Monotonicity}:
          $J_N^* \leq J_{N+1}^*$ for all $N \geq 1$.
    \item \emph{Boundedness}:
          $J_N^* \leq J_\infty(\bK^{\mathrm{feas}})$ for all
          $N \geq 1$.
    \item \emph{Convergence}:
          $J_N^* \to \bar{J}$ as $N \to \infty$, with
          $\bar{J} \leq J_\infty(\bK)$ for every jointly stabilizing
          $\bK$.
    \item \emph{Gain boundedness}:
          The first-stage gain sequence $\{\hat{\bK}(N)\}$ is bounded.
\end{enumerate}
\end{theorem}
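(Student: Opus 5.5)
The plan is to work with the joint problem~\eqref{eq:problem} rather than with the sweep directly, invoking the statement in \S\ref{sec:decomp} that the backward sweep attains the joint minimum. This gives $J_N^* = \min_{\bK_{0:N-1}} J_N(\bK_{0:N-1})$. The basic structural fact I would use throughout is that, with $Q_N=0$, every term of~\eqref{eq:cost-per-p} is non-negative. Equivalently, $P_t(p)\succeq 0$ for all $t$ and $p$, by backward induction on~\eqref{eq:backward-P} starting from $P_N = 0$.

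For (i), take the optimal $(N{+}1)$-horizon gains $\bK^*_{0:N}$ and truncate them to $\bK^*_{0:N-1}$. For every $p$, the $N$-horizon cost of the truncated gains equals the $(N{+}1)$-horizon cost minus the stage-$N$ term $x_N^\top(Q+K(p,N)^\top R K(p,N))x_N \geq 0$. Integrating over $\mu$ and minimizing gives $J_N^* \le J_N(\bK^*_{0:N-1}) \le J_{N+1}^*$. For (ii), use the time-invariant choice $K_{i,t} = K_i^{\mathrm{feas}}$ for all $t$. Its $N$-horizon cost at each $p$ is a partial sum of the convergent series defining $\Tr(P_\infty(p;\bK^{\mathrm{feas}})\Sigma_0)$, which converges because $\rho(\Acl(p;\bK^{\mathrm{feas}}))<1$ under Assumption~\ref{ass:stabilizable}. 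Since all terms are non-negative, the partial sum is at most the full series. Integrating gives $J_N^* \le J_N(\bK^{\mathrm{feas}}) \le J_\infty(\bK^{\mathrm{feas}})$.

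Part (iii) then follows from monotone convergence of real sequences: $\{J_N^*\}$ is non-decreasing and bounded above, so it has a finite limit $\bar J$. The same argument as in (ii), applied to an arbitrary jointly stabilizing $\bK$ in place of $\bK^{\mathrm{feas}}$, gives $J_N^*\le J_\infty(\bK)$ for all $N$, and hence $\bar J \le J_\infty(\bK)$. Here I would take $\bar J$ to denote this limit. Identifying it with the label~\eqref{eq:riccati-fp} of an integrated fixed point would require pointwise convergence of $P_0^*(p;N)$, which does not follow from integrated monotonicity alone. I would not claim that identification beyond noting consistency with Lemma~\ref{lem:uniqueness}.

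For (iv), I would lower-bound $J_N^*$ by its stage-$0$ control term, dropping $Q$ and $\Acl^\top P_1^*\Acl \succeq 0$. This gives
\[
J_N^* \ge \int_\Omega \Tr\bigl(K(p,0)^\top R K(p,0)\Sigma_0\bigr)d\mu \ge \lambda_{\min}(R)\lambda_{\min}(\Sigma_0)\int_\Omega \|K(p,0)\|_F^2\, d\mu .
\]
Writing $K(p,0)=\sum_i\alpha_i(p)\hat K_i(N)$, the last integral equals $\vect(\hat\bK)^\top(\Gamma\otimes I)\vect(\hat\bK) \ge \lambda_{\min}(\Gamma)\sum_i\|\hat K_i(N)\|_F^2$. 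By Lemma~\ref{lem:gram-pd}, $\lambda_{\min}(\Gamma)>0$. Combining with (ii) yields the uniform bound
\[
\sum_i\|\hat K_i(N)\|_F^2 \le \frac{J_\infty(\bK^{\mathrm{feas}})}{\lambda_{\min}(R)\lambda_{\min}(\Sigma_0)\lambda_{\min}(\Gamma)}.
\]

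I expect the main obstacle to be the justification in (i) that the stagewise sweep yields the joint minimum. Monotonicity is false for arbitrary suboptimal sequences, so I would carefully rely on the decomposition claim of \S\ref{sec:decomp}. If needed, I would re-derive that claim by noting that, for fixed later gains, the stage-$h$ gains enter only through~\eqref{eq:Ph-single-step}.
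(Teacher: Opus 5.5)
Your argument for (i)--(iii) matches the paper's:
\begin{enumerate}[(i)]
\item truncation, plus non-negativity of the discarded stage-$N$ running cost;
\item the time-invariant $\bK^{\mathrm{feas}}$ policy as a feasible comparison;
\item monotone convergence, plus the same comparison applied to an arbitrary jointly stabilizing $\bK$.
\end{enumerate}
The paper's proof also only defines $\bar J$ as the limit, so your caution about identifying it with a fixed point is consistent with the paper.

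In (iv) your route is the correct one. You bound $J_N^*=\Phi_0(\hat{\bK}(N))$ from below by its $R$-term after discarding $Q$ and $\Acl^\top P_1^*\Acl\succeq 0$. The paper instead asserts \eqref{eq:coercive} with an additive $\Phi_0(0)$, and that inequality is false whenever the stage minimizer is nonzero. The reason is that $\Phi_0(\bK_0)-\Phi_0(0)$ contains the sign-indefinite term $2\int_\Omega\Tr[K(p,0)^\top B(p)^\top P_1(p)A(p)\Sigma_0]\,d\mu$. Combined with $\Phi_0(\hat{\bK}(N))\le\Phi_0(0)$ at the minimizer, \eqref{eq:coercive} would force $\hat{\bK}(N)=0$. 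The paper's displayed chain in fact proves only your inequality, which is all that is needed.

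The obstacle you flag is real, and the paper's proof has it too. Both arguments establish (i)--(ii) for $\inf_{\bK_{0:N-1}}J_N$. They pass to the sweep output $J_N^*$ only through the equivalence claimed in \S\ref{sec:decomp}.

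Your fallback re-derivation cannot supply that equivalence. That stage-$h$ gains enter only through \eqref{eq:Ph-single-step} shows only that they do not affect stages $h{+}1,\dots,N{-}1$. With the earlier gains fixed, the joint objective is a constant plus $\int_\Omega\Tr[P_h(p)\,\Psi(p)\Sigma_0\Psi(p)^\top]\,d\mu$, where $\Psi(p)=\Acl(p,h{-}1)\cdots\Acl(p,0)$. So the stage-$h$ weight is the $p$-dependent state covariance, not $\Sigma_0$.

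The minimizer of $\int_\Omega\Tr[P_h(p)W(p)]\,d\mu$ over PDC gains depends on the weight $W$, unless the pointwise Riccati schedule $K_h^{\mathrm{ric}}$ lies in the PDC span (e.g.\ $\Nv=1$). In general, therefore, the greedy sweep is not jointly optimal: at its output the joint gradient in the stage-$h$ gains, $h\ge 1$, is generically nonzero.

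You have two ways to make the proof rigorous:
\begin{enumerate}[(a)]
\item State (i)--(iv) for the joint problem, using $\epsilon$-optimal policies in (i) rather than an attained minimizer. Everything you wrote then goes through.
\item Adopt the equivalence of \S\ref{sec:decomp} explicitly as a hypothesis.
\end{enumerate}
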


\begin{proof}
Denote $\ell_t := x_t^\top Q x_t + u_t^\top R u_t$ and
$\bar{M} := J_\infty(\bK^{\mathrm{feas}}) < \infty$.

\emph{(i)}
With $Q_N = 0$, the integrated cost sums non-negative running costs.
For any $(N{+}1)$-horizon policy,
\[
    \int_\Omega\sum_{t=0}^{N}\ell_t\,d\mu
    = \int_\Omega\sum_{t=0}^{N-1}\ell_t\,d\mu
    + \underbrace{\int_\Omega\ell_N\,d\mu}_{\geq\,0}
    \geq J_N^*.
\]
Taking the infimum over all $(N{+}1)$-horizon policies gives
$J_{N+1}^* \geq J_N^*$.

\emph{(ii)}
Using $K_{i,t}=K_i^{\mathrm{feas}}$ for all $t,i$ is feasible but
suboptimal. With $Q_N = 0$ the terminal cost vanishes, so
$J_N^* \leq \sum_{t=0}^{N-1}\int_\Omega \ell_t^{\mathrm{feas}}\,d\mu
\leq \bar{M}$,
where $\ell_t^{\mathrm{feas}}$ is the running cost under $\bK^{\mathrm{feas}}$,
and the second inequality holds because the running costs sum to
$\bar{M}$ in the limit.

\emph{(iii)}
By (i) and (ii), $\{J_N^*\}$ is non-decreasing and bounded above,
so $J_N^* \to \bar{J} \leq \bar{M}$.
For any jointly stabilizing $\bK$ and all $N$,
the time-invariant policy $K_{i,t} = K_i$ for all $t$
with $Q_N = 0$ gives a feasible $N$-horizon cost
$\sum_{t=0}^{N-1}\int_\Omega\ell_t^{\bK}\,d\mu \leq J_\infty(\bK)$,
so $J_N^* \leq J_\infty(\bK)$ for all $N$.
Taking the limit gives $\bar{J} \leq J_\infty(\bK)$ for all
jointly stabilizing $\bK$.

\emph{(iv)}
Let $\sigma_R := \sigma_{\min}(R)$ and $\sigma_\Sigma := \sigma_{\min}(\Sigma_0)$.
The stage cost $\Phi_0$ satisfies
\begin{equation}\label{eq:coercive}
    \Phi_0(\bK_0) \geq c\sum_{i=1}^{\Nv}\|K_{i,0}\|_F^2 + \Phi_0(0),
\end{equation}
where $c = \sigma_R \cdot \sigma_\Sigma \cdot \lambda_{\min}(\Gamma) > 0$, since
\begin{align*}
&\int_\Omega\Tr\bigl[K(p,0)^\top RK(p,0)\Sigma_0\bigr]d\mu \\
&\geq \sigma_R\sigma_\Sigma
  \int_\Omega\Bigl\|\textstyle\sum_i\alpha_i(p) K_{i,0}\Bigr\|_F^2 d\mu \\
&= \sigma_R\sigma_\Sigma
  \sum_{i,j}\Gamma_{ij}\Tr(K_{i,0}^\top K_{j,0}) \\
&\geq \sigma_R\sigma_\Sigma\,\lambda_{\min}(\Gamma)
  \sum_i\|K_{i,0}\|_F^2
= c\sum_i\|K_{i,0}\|_F^2,
\end{align*}
where the first inequality uses $\Tr[V^\top R V \Sigma_0]
\geq \sigma_R\sigma_\Sigma\|V\|_F^2$ for any $V$,
the second expands the squared Frobenius norm via linearity of trace,
and the third uses $\Gamma\succ 0$ (Lemma~\ref{lem:gram-pd}).
Since $\Phi_0(\hat{\bK}(N))\leq J_N^*\leq\bar{M}$,
the sequence $\{\hat{\bK}(N)\}$ is bounded.
\end{proof}

\begin{remark}[Infinite-horizon characterization]\label{rem:inf-horizon}
By Lemma~\ref{lem:uniqueness}, all jointly stabilizing integrated PDC
fixed points achieve the same cost $\bar{J}$.
Every jointly stabilizing accumulation point $\bar{\bK}$ of
$\{\hat{\bK}(N)\}$ is such a fixed point, hence
$J_\infty(\bar{\bK}) = \bar{J}$.
This is verified empirically in Section~\ref{sec:experiments}.
\end{remark}

\begin{remark}[Convergence for general $Q_N$]\label{rem:universal}
Proposition~\ref{prop:terminal-monotone} gives
$J_N^*(0) \leq J_N^*(Q_N)$ for any $Q_N \succeq 0$,
so $\bar{J}$ is a lower bound on every such sequence.
For the upper bound: since $\hat{\bK}(N)$ is jointly stabilizing
for all large $N$ (Remark~\ref{rem:inf-horizon}), it is a feasible
time-invariant PDC controller, and therefore
\begin{equation}\label{eq:chain}
    \bar{J} \;\leq\; \inf_{\bK}\,J_\infty(\bK)
    \;\leq\; J_\infty(\hat{\bK}(N)).
\end{equation}
If $J_\infty(\hat{\bK}(N))\to\bar{J}$, the squeeze
$\bar{J} \leq \liminf_N J_N^*(Q_N)
\leq \limsup_N J_N^*(Q_N) \leq \bar{J}$
gives $J_N^*(Q_N)\to\bar{J}$ for \emph{any} $Q_N\succeq 0$.
This limit holds if and only if the time-varying optimal
stage gains become approximately stationary as $N\to\infty$, the PDC analogue of Riccati convergence, which is the
key open problem in the polytopic setting.
This convergence is verified empirically in
Section~\ref{sec:experiments}.
\end{remark}

\begin{remark}[Terminal cost and monotonicity]\label{rem:terminal-cost}
For $Q_N$ satisfying~\eqref{eq:lyap-condition}, $J_N^*$ is non-increasing.
Applying $\bK^{\mathrm{feas}}$ at stage $N$ gives a one-step cost-to-go
$P_N^{\mathrm{sub}}(p) \preceq Q_N(p)$ pointwise, so
\[
    J_{N+1}^*(Q_N) \leq J_N^*(P_N^{\mathrm{sub}}) \leq J_N^*(Q_N),
\]
where the first inequality uses feasibility of this $(N{+}1)$-horizon
policy and the second follows from
Proposition~\ref{prop:terminal-monotone}.
The vertex-ARE average $Q_N = P_{\mathrm{are}} :=
\frac{1}{\Nv}\sum_{i}P_{\mathrm{ric},i}$
satisfies~\eqref{eq:lyap-condition} for $\bK^{\mathrm{feas}}=\bar{K}^{\mathrm{ric}}$,
giving the non-increasing behavior observed in the experiments
(Section~\ref{sec:experiments}).
For $\Nv=1$, classical Riccati monotonicity~\cite{lancaster1995} yields
unconditional convergence for any $Q_N\succeq 0$~\cite{zhang2023rhpg};
in the polytopic case this reduces to showing
$J_\infty(\hat{\bK}(N))\to\bar{J}$ (Remark~\ref{rem:universal}).
\end{remark}

\section{Experiments}
\label{sec:experiments}
 
We validate P-RHPG on the 2-DoF aeroelastic wing benchmark
of~\cite{takarics2021}: a discrete-time qLPV system with state dimension
$n=4$, input dimension $m=1$, and three-dimensional scheduling parameter
$p = (V, \delta_K, \delta_C) \in \Omega$, where $V \in [5, 35]$\,m/s is
the airspeed and $\delta_K, \delta_C$ are stiffness and damping
perturbations. The system is sampled at $T_s = 0.002$\,s and goes
unstable beyond the flutter speed $V_f \approx 25.6$\,m/s. The TP
model transformation is applied over $\Omega$ with HOSVD truncation ranks
$[r_1, r_2, r_3]$; the resulting vertex count is $\Nv = r_1 r_2 r_3$.
Across all experiments the cost matrices are
$Q = \mathrm{diag}(10, 1, 100, 1)$ and $R = 1$, with initial covariance
$\Sigma_0 = I_4$.
 
Two scalar quantities are used throughout. The \emph{pointwise cost}
\begin{equation}\label{eq:pointwise-cost}
    J_\infty(p;\,\bK) \;=\; \Tr\bigl(P_\infty(p;\bK)\,\Sigma_0\bigr)
\end{equation}
is the frozen-parameter infinite-horizon cost at a single operating point
$p$, where $P_\infty(p;\bK)$ solves the Lyapunov equation defined after
\eqref{eq:cost-inf}. The integrated cost $J_\infty(\bK)$ from
\eqref{eq:cost-inf} satisfies $J_\infty(\bK) = \int_\Omega
J_\infty(p;\bK)\,d\mu(p)$. The \emph{pointwise Riccati lower bound}
\begin{equation}\label{eq:jric}
    J_{\mathrm{ric}}(p) \;=\; \Tr\bigl(P_{\mathrm{ric}}(p)\,\Sigma_0\bigr)
\end{equation}
is the cost achieved by the unconstrained LQR gain at $p$, where
$P_{\mathrm{ric}}(p)$ solves the discrete-time ARE at the frozen LTI
system $(A(p), B(p))$. Since the PDC structure constrains the gain,
$J_\infty(p;\bK) \geq J_{\mathrm{ric}}(p)$ for all $p$ and $\bK$.
The integrated Riccati bound $J_{\mathrm{ric}} = \int_\Omega
J_{\mathrm{ric}}(p)\,d\mu(p)$ serves as a lower bound on
$J_\infty(\bK)$ for any feasible $\bK$.
 
All P-RHPG solves use the direct method~\eqref{eq:linear-system}
with Gauss--Legendre quadrature ($N_q = 432$ points for the
$[3,3,3]$ grid). Joint stability is evaluated as
$\max_p \rho(\Acl(p;\hat{\bK}))$ over a fine grid of $1280$
uniformly spaced points in $\Omega$. The LMI-PDC baseline uses
common-Lyapunov PDC synthesis~\cite{tanaka2001} with an
$\mathcal{H}_2$ objective, solved via CVXPY~\cite{diamond2016cvxpy} with the SCS
backend~\cite{odonoghue2016scs}; parameter-dependent Lyapunov relaxations were not
attempted.
 
 \begin{figure*}[t]
  \centering
  \includegraphics[width=\textwidth]{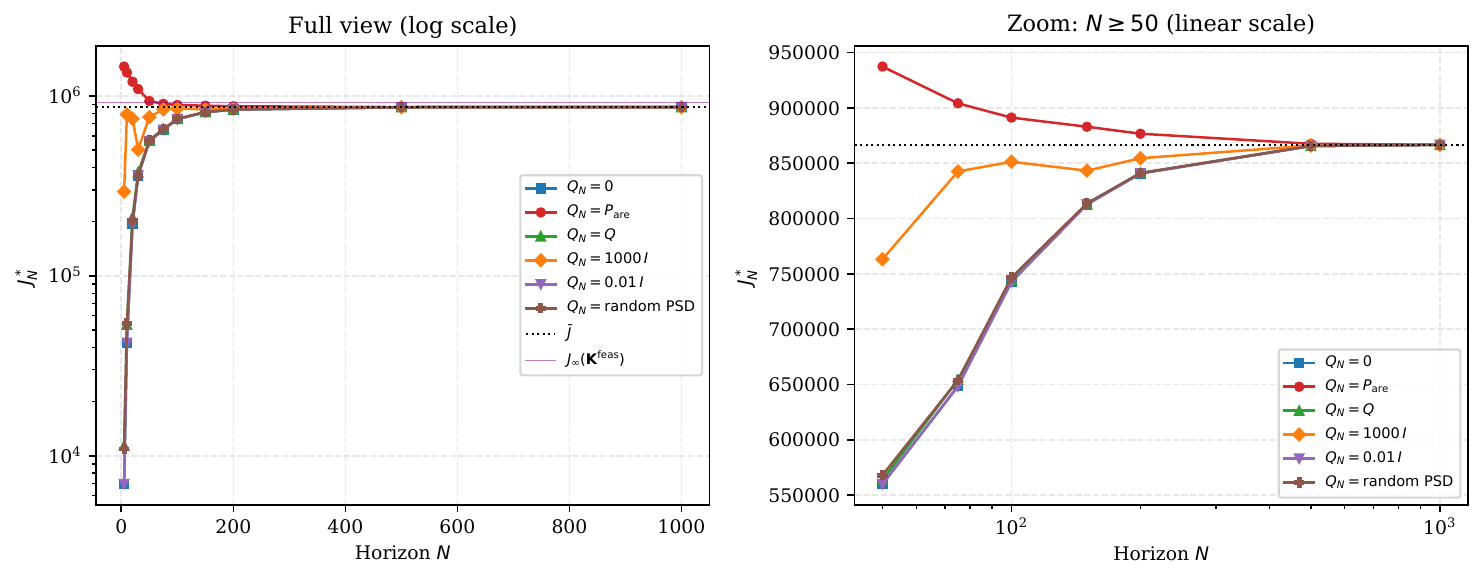}
  \caption{$J_N^*$ vs.\ horizon $N$ ($\Nv{=}27$, grid $[3,3,3]$) for
    six terminal cost choices.
    Left: full view (log scale); right: zoom over $N \geq 50$
    (linear scale).
    $Q_N{=}0$ (blue) rises monotonically;
    $Q_N{=}P_{\mathrm{are}}$ (red) decreases monotonically.
    All six are bounded by $J_\infty(\bK^{\mathrm{feas}})$ (purple)
    and converge to $\bar{J}$ (dotted).
    Squeeze gap at $N{=}1000$: $0.006\%$.}
  \label{fig:convergence}
\end{figure*}
\subsection{Convergence in Horizon}
\label{sec:exp1}
 
In order to validate Theorem~\ref{thm:convergence}, we fix the grid to $[3,3,3]$
($\Nv = 27$) and sweep the horizon
$N \in \{5, 10, 20, 30, 50, 75, 100, 150, 200, 500, 1000\}$ under six
terminal cost choices: $Q_N{=}0$, $Q_N{=}P_{\mathrm{are}}$
(vertex-ARE average), $Q_N{=}Q$ (running cost), $Q_N{=}1000\,I$,
$Q_N{=}0.01\,I$, and a random PSD matrix drawn from a Wishart
distribution. Figure~\ref{fig:convergence} shows the results.
 
\emph{(i) Monotonicity}: with $Q_N{=}0$, $J_N^*$ is non-decreasing
with zero violations, validating Theorem~\ref{thm:convergence}(i).
With $Q_N{=}P_{\mathrm{are}}$, $J_N^*$ is non-increasing with zero
violations (Remark~\ref{rem:terminal-cost}).
For $Q_N{=}0$ at short horizons the deployed gains are not jointly
stabilizing ($\rho_{\max} > 1$), yet $J_N^*$ remains finite and
monotonically increasing as guaranteed.
The remaining four choices are not monotone but all converge.
\emph{(ii--iii) Boundedness and convergence}: for $Q_N{=}0$,
$J_N^* \leq J_\infty(\bK^{\mathrm{feas}}) = 9.215 \times 10^5$
and $J_N^* \to \bar{J} \approx 8.667 \times 10^5$, validating
Theorem~\ref{thm:convergence}(ii--iii).
All six sequences converge to the same limit with a squeeze gap of
$0.006\%$ at $N{=}1000$, consistent with Remark~\ref{rem:universal};
the deployed gains are jointly stabilizing for all six choices at
large enough $N$ (Remark~\ref{rem:inf-horizon}).

\subsection{Flutter Suppression and Comparison with LMI-PDC}
\label{sec:exp2}
 
Despite the cross-term coupling discussed in
Section~\ref{sec:crossterms}, P-RHPG achieves joint stability
without imposing it as an explicit constraint: at the $[3,2,2]$ grid
($\Nv{=}12$, $N{=}100$), $\max_p\rho(\Acl(p;\hat{\bK})) = 0.996$
over $1280$ evaluation points, even though some open-loop vertex
matrices have $\rho(A_i) > 1$.

Figure~\ref{fig:trajectory} shows the closed-loop trajectories at
$V = 35$\,m/s (above flutter), initialized at
$x_0 = (0, 0, 0.05, 0)^\top$ (a $5\%$ pitch perturbation).
Without control, the system diverges at $t \approx 3.6$\,s
(red dashed line).  The P-RHPG controller ($[3,2,2]$ grid, $N=100$)
suppresses both modes, settling by $t \approx 0.78$\,s.  The LMI-PDC
baseline ($[2,2,2]$ grid, $\Nv=8$) is frozen-parameter stable
($\rho_{\max} < 1$) but with far worse transient performance: pitch
damps slowly (the inset shows residual oscillation at $t=1.5$\,s),
and plunge grows monotonically over the $6$\,s simulation window, a
consequence of the $47\%$ cost gap that pushes the spectral radius
close to unity.

\begin{figure}[t]
  \centering
  \includegraphics[width=\columnwidth]{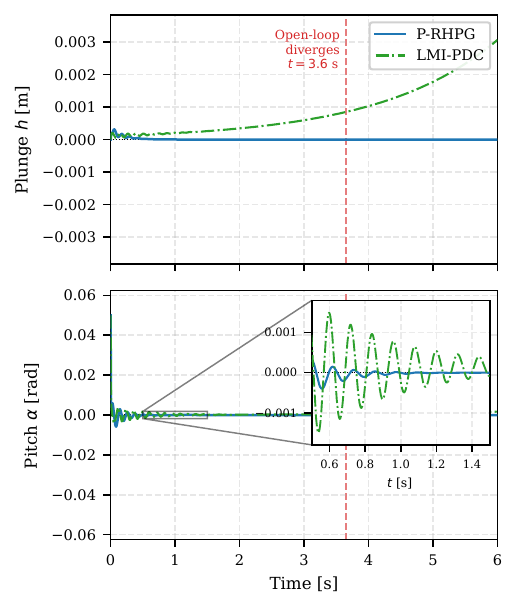}
  \caption{Closed-loop trajectories at $V{=}35$\,m/s (above flutter).
    P-RHPG ($[3,2,2]$ grid, $N{=}100$) suppresses both plunge and pitch,
    settling by $0.78$\,s.  LMI-PDC ($\Nv{=}8$) is frozen-parameter stable
    but exhibits slow transient decay.
    Red dashed line: open-loop divergence time ($t{\approx}3.6$\,s).
    Inset: zoom of pitch transient over $[0.5, 1.5]$\,s.}
  \label{fig:trajectory}
\end{figure}

\begin{table}[t]
  \centering
  \caption{P-RHPG vs.\ LMI-PDC across vertex grids ($N{=}100$).
    Gap $= (J_\infty - J_{\mathrm{ric}})/J_{\mathrm{ric}}$.
    Stab.: frozen-parameter stability ($\rho_{\max} < 1$).}
  \label{tab:comparison}
  \setlength{\tabcolsep}{4pt}
  \small
  \begin{tabular}{clrrrc}
    \toprule
    $\Nv$ & Method & $J_\infty$ & Gap [\%] & Time [s] & Stab. \\
    \midrule
    \multirow{2}{*}{8}
      & P-RHPG  & $8.87{\times}10^5$ & $2.5$ & $11.1$ & \checkmark \\
      & LMI-PDC & $1.27{\times}10^6$ & $47.2$ & $1.0$ & \checkmark \\
    \midrule
    \multirow{2}{*}{12}
      & P-RHPG  & $8.80{\times}10^5$ & $1.7$ & $21.5$ & \checkmark \\
      & LMI-PDC & infeas. & --- & $2.8$  & $\times$ \\
    \midrule
    \multirow{2}{*}{18}
      & P-RHPG  & $8.80{\times}10^5$ & $1.7$ & $45.7$ & \checkmark \\
      & LMI-PDC & infeas. & --- & $5.9$  & $\times$ \\
    \midrule
    \multirow{2}{*}{27}
      & P-RHPG  & $8.80{\times}10^5$ & $1.7$ & $105.2$ & \checkmark \\
      & LMI-PDC & infeas. & --- & $12.8$  & $\times$ \\
    \midrule
    \multirow{2}{*}{36}
      & P-RHPG  & $8.79{\times}10^5$ & $1.6$ & $177.2$ & \checkmark \\
      & LMI-PDC & infeas. & --- & $25.9$  & $\times$ \\
    \bottomrule
  \end{tabular}
\end{table}

Table~\ref{tab:comparison} reports $J_\infty$, the suboptimality gap,
and solve time across five vertex grids ($N=100$).
P-RHPG is feasible at every grid size, with the gap narrowing from
$2.5\%$ at $\Nv{=}8$ to $1.6\%$ at $\Nv{=}36$.
LMI-PDC is feasible only at $\Nv{=}8$, where it incurs a $47.2\%$
gap, nearly $19\times$ larger than P-RHPG, and is infeasible
at all finer grids, a known ill-conditioning phenomenon~\cite{sala2007}.

\subsection{Gradient Descent Convergence}
\label{sec:exp3}
 
Figure~\ref{fig:gd} validates Corollary~\ref{cor:consequences} on the
aeroelastic wing ($[3,2,2]$ grid, $N{=}100$) by running the full
backward sweep with GD ($\eta = 1/\sigma_{\max}(\nabla^2\Phi_h)$,
$\bK_h^{(0)} = 0$). Gradient norms decay at a linear rate on a log
scale at all stages, matching the theoretical rate
$1 - \kappa(\nabla^2\Phi_h)^{-1}$ exactly (not merely as an upper
bound, since each stage is quadratic). The cold-start stage
$h{=}N{-}1$ has $\kappa = 58.2$ ($769$ iterations to $10^{-10}$);
warmed-up stages converge significantly faster ($\kappa \approx 16.1$,
$270$ iterations). The deployed controller matches the direct solve to
machine precision: relative difference $8.9{\times}10^{-15}$.

 \begin{figure}[tp]
  \centering
  \includegraphics[width=\columnwidth]{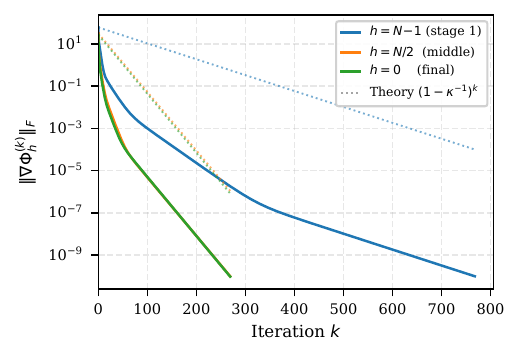}
  \caption{Gradient norm $\|\nabla\Phi_h^{(k)}\|_F$ vs.\ iteration $k$
    for three representative stages ($[3,2,2]$ grid, $N{=}100$).
    Dotted lines: theoretical rate $(1-\kappa^{-1})^k$.}
  \label{fig:gd}
\end{figure}

\section{Conclusion}
\label{sec:conclusion}
We have introduced P-RHPG, a policy gradient algorithm that unifies two
previously separate research tracks in systems and control. First, it is shown that the finite-horizon backward 
decomposition for direct policy optimization reveals a strongly convex structure in the vertex gains
at each stage; this observation in turn enables global convergence guarantees without
stability constraints or stabilizing initialization, a structure
not apparent in joint LMI-based formulations. Theoretically, we established monotone cost convergence and gain
boundedness for zero terminal cost, and a ``squeeze'' characterization (lower and upper bounds) of convergence for general terminal cost. Proving that the
time-varying optimal gains become approximately stationary,
the PDC analogue of Riccati convergence, remains the key open
problem. Experiments confirm near-optimal performance relative to
the pointwise Riccati lower bound and consistent stabilization at
vertex resolutions where LMI-based synthesis fails.

Future work includes model-free extensions using zeroth-order gradient estimation, formal sample complexity analysis, output-feedback PDC structures, and multi-agent systems with shared polytopic structure.

\bibliographystyle{ieeetr}
\bibliography{references}

\end{document}